\documentclass[english,12pt]{article}
\usepackage{lmodern}

\usepackage[T1]{fontenc}
\usepackage[dvipsnames,svgnames,x11names,hyperref]{xcolor}
\usepackage[margin=1truein]{geometry}
\usepackage{tabularray}
\usepackage{babel}
\usepackage{booktabs}
\usepackage{amsmath}
\usepackage{amsthm}
\usepackage{amssymb}
\usepackage{graphicx}
\usepackage{setspace}
\usepackage{caption}
\usepackage[lined,ruled,boxed,commentsnumbered]{algorithm2e}

\usepackage{subcaption}
\usepackage{comment}

\usepackage{natbib}
\setcitestyle{authoryear,open={(},close={)}}

\usepackage{tikz}
\usepackage{soul}
\usepackage{lineno}

\usepackage{dsfont}
\usepackage{lipsum}

\newcommand{\secname}{Section}
\newcommand{\algname}{Algorithm}

\usepackage[unicode=true,bookmarks=true,bookmarksnumbered=false,bookmarksopen=false,breaklinks=false,pdfborder={0 0 0},pdfborderstyle={},backref=page,colorlinks=true]{hyperref}
\hypersetup{linkcolor=RoyalBlue,citecolor=RoyalBlue, urlcolor = RoyalBlue}

\makeatletter

\newcommand{\pr}{\mbox{pr}}
\newcommand{\logL}{L}
\newcommand{\HR}{\varphi}
\newcommand{\indicator}{\mathds{1}}

 \theoremstyle{remark}

 \theoremstyle{definition}
 
 \newtheorem{prop}{\protect Proposition}
  \newtheorem{assumption}{\protect Assumption}

\newcommand*\samethanks[1][\value{footnote}]{\footnotemark[#1]}

\title{A tree-based scan statistic for database studies with time-to-event outcomes}
\date{}

\author{Massimiliano Russo\thanks{
Department of Statistics, The Ohio State University, Columbus, OH, U.S.A., russo.325@osu.edu},
Georg Hahn\thanks{Division of Pharmacoepidemiology and Pharmacoeconomics, Brigham and Women's Hospital and Harvard Medical School, Boston, MA, U.S.A.},
Krista F. Huybrechts\samethanks,
and Shirley V. Wang\samethanks}

\begin{document}

\maketitle
\begin{abstract}\setstretch{1.0}
Tree-based scan statistics (TBSSs) are machine learning methods for disproportionality analyses. They simultaneously scan for thousands of hierarchically related health outcomes to detect potential signals of harm from drugs and vaccines, controlling for multiplicity. TBSSs have been extensively used to mine insurance claims databases to detect potential drug adverse events. Current TBSS implementations do not allow comparative safety evaluations with time-to-event outcomes. Explicitly accounting for event timing in analyses can improve power to detect signals compared to methods that use only event counts. We propose three TBSS methods that explicitly leverage event timing to detect potentially harmful effects of drugs. The first assumes proportional hazard rates for each node of the outcome hierarchy and uses a permutation scheme for inference. The second builds on exponential survival models for the terminal nodes of the hierarchy, assuming constant hazard rates at each node, and uses a parametric bootstrap for inference. The third uses robust asymptotic approximations of the hazard rates in connection with an approximate parametric bootstrap.  We compare the proposed methods with standard event count based TBSSs in simulation scenarios. Finally, we present results from a database study comparing two glucose-lowering medications among adults with type 2 diabetes. 
\end{abstract}
\textbf{Keywords:} 
Data mining,
Epidemiology, 
Multiple testing,
Scan statistics,
Tree variable.

\section{Introduction}
Tree-based scan statistics (TBSSs) are methods to conduct disproportionality analyses with many hierarchically related outcomes while maintaining tight control of type-I error~\citep{kulldorff:2003}. They are based on scan statistical theory~\citep{glaz:2001}, and search for potential increased risks of drugs and vaccines on thousands of hierarchically related outcomes, sequentially aggregating these outcomes and computing disproportionality measures for each aggregation. They have been routinely used to implement safety surveillance by organizations such as the Food and Drug Administration (FDA) and the Centers for Disease Control and Prevention (CDC); see, for example,  \citet{brown:2013},  \citet{yih:2023a}, and \citet{yih:2023b}.  TBSS are typically used to generate hypotheses on the safety of medical products and facilitate resource allocation by prioritizing areas for further investigation. 

For studies conducted using routinely collected healthcare data, such as large insurance claims databases or electronic health records, the hierarchical structure used by TBSSs to relate outcomes typically reflects a hierarchical classification of clinical diagnosis codes. Common structures used in pharmacoepidemiology include the 9th and 10th iterations of the International Disease Classification (ICD) taxonomy~\citep[e.g.,][]{huybrechts:2021,fralick:2021}. An example of a hierarchical structure based on the 10th revision of the ICD is depicted in \figurename~\ref{fig:ex_tree}. Each node in \figurename~\ref{fig:ex_tree} provides a definition of a cardiovascular event (outcome) that can be organized across multiple levels of a tree, from broad outcome definitions as disease categories (top) to increasingly specific outcome-nodes (bottom). Other hierarchical structures, such as the Medical Dictionary for Regulatory Activities (MedDRA), have also been considered~\citep{russo:2025}.
\begin{figure}[ht!]
    \centering
    \resizebox{\textwidth}{!}{
\begin{tikzpicture}[sibling distance=35mm, every text node part/.style={align=center}, level distance=20mm,
   every node/.style={fill=RoyalBlue!30,rectangle, inner sep=3pt}]
\node {\textbf{I00--I99} \\ Diseases of the circulatory system }
    child {node {\textbf{I20-I25} \\
Ischemic heart diseases} child {node[fill = white] {\textbf{...}}}}
    child {node[fill=white] {\textbf{...}}}
    child {node {\textbf{I30-I5A} \\
Other forms of heart disease}
      child {node {\textbf{I50} \\ Heart failure}
child {node {\textbf{I50.2} \\ Systolic (congestive) \\  heart failure} 
 child {node {\textbf{I50.21} \\ Acute systolic \\ (congestive) \\ heart failure}} 
 child {node[fill=white] {\textbf{...}}} 
 child {node {\textbf{I50.23} \\ Acute on chronic systolic \\ (congestive) \\ heart failure}} 
 }
child {node[fill=white] {\textbf{...}}}
child {node {\textbf{I50.3}  \\ Diastolic (congestive) \\ heart failure}
child {node[fill = white] {\textbf{...}}}}
}
      child {node[fill=white] {\textbf{...}}}
      child {node {\textbf{I5A} \\
Non-ischemic myocardial injury\\ (non-traumatic)} child {node[fill = white] {\textbf{...}}}}
    };
\end{tikzpicture}}
 \caption{Example of a tree structure based on the 10th revision of the International Statistical Classification of Diseases and Related Health Problems (ICD-10). This figure depicts a section of a five-level tree that can be used to describe diseases of the circulatory system; several nodes have been removed and indicated with the `\textbf{...}' symbol. Note that codes referring to finer definitions are characterized by increasing decimal points.} 
    \label{fig:ex_tree}
\end{figure}
Typically, in comparative safety studies, the focus is on testing whether the distribution of a given outcome node differs between two exposure groups. TBSSs jointly consider all the outcome nodes defined by a hierarchy and test the global null hypothesis that there are no outcome nodes across the hierarchy for which there is a difference in risk between the exposure groups. The alternative hypothesis is that there is at least one outcome node for which there is a difference in risk. As a test statistic for the global null, TBSSs use the most extreme observed disproportion across all nodes in the hierarchy. The distribution of this statistic is typically not available in closed form, and it is approximated using Monte Carlo (MC) methods to adjust for multiple comparisons and account for the correlation among the outcomes~\citep{kulldorff:2003}. Node-specific hypotheses are also tested, and the corresponding p-values are used to identify the nodes for which the null hypothesis is likely to be violated. TBSS procedures are very general and require the specification of four main ingredients: 1) a hierarchical outcome structure (i.e., the tree); 2) a summary statistic that effectively quantifies disproportion; 3) an MC algorithm to generate data replicates under the global null hypothesis; and 4) observed outcome counts for the terminal nodes (input data).  We describe the TBSS procedure in detail in \secname~\ref{sec:TBSS}.

Despite this generality, to date, only a few combinations of test statistics and MC algorithms have been considered in the literature. The most common approaches are based on the Bernoulli TBSS~\citep[e.g.,][]{schachterle:2019}, which considers successes and failures, and the Poisson TBSS~\citep[e.g.,][]{kulldorff:2013}, which compares the observed number of events at each node to expected counts. Some extensions to these approaches have been considered. For example, \citet{fralick:2021} and \citet{russo:2024} introduce extensions of the Bernoulli TBSS for stratified analysis. \citet{park:2022} generalize the Poisson TBSS to include a parameter to account for the excess of zeros. Moreover,  \citet{heo:2023} consider a case-control design that accounts for dependencies within matched pairs. Although TBSS models have been extensively used in practice, to the best of our knowledge, statistical properties and error rates control beyond the rejection of the global null have not been discussed in the literature. However, it is common practice to use node-specific hypotheses to identify potential signals and inform further analysis. Our first contribution is to provide a characterization of the {\em Family-Wise Error Rate (FWER)} for the node-specific hypotheses in \secname~\ref{sec:TBSS}. 

Additionally, most databases include information on when variables (exposure, covariates, and outcomes) have been measured or recorded in the database. This information is crucial in the design phase of a study, for example, to select confounders, specify study periods,  decide on follow-up and washout windows and, more broadly, to make comparisons between drugs fair and reduce bias~\citep[e.g.,][]{gerhard:2008,suissa:2008,pottegard:2022}. At the analysis stage, when the interest is in one binary outcome and time-to-event data are available, a Cox model is typically preferred to a logistic regression as it is generally more efficient~\citep {annesi:1989}. Currently, when the interest is in studying multiple outcomes, TBSS methods use time-to-event data at the design stage to ensure that outcome events are captured within follow-up windows that are comparable between the index exposure and comparator. However, information on the actual time-to-event is discarded at the analysis stage.

To fill this gap, we propose three novel TBSSs that leverage time-to-event data to increase signal-detection power. The first method (\textit{CoxTBSS}, \secname~\ref{sec:cox}) estimates a Cox proportional hazards model~\citep{Cox:1972} for each node of the hierarchical structure, and uses a permutation scheme to approximate the distribution of the test statistics under the global null. \textit{CoxTBSS} is flexible and applicable in many settings but requires patient-level data for inference, which can be restrictive in distributed contexts and computationally prohibitive for large studies.
The second method (\textit{ExponentialTBSS}, \secname~\ref{sec:exp}) estimates constant hazard rates for the leaf-level nodes and uses a parametric bootstrap, which requires only four data summaries for each leaf node to conduct inference. The resulting approach is similar to the traditional Poisson TBSS, but it does not assume that expected counts are observed without error. Finally, the third method (\textit{RobustTBSS}, \secname~\ref{sec:robust}) robustifies the second method by using approximate inference that is robust to model misspecification, particularly the assumption of constant hazard rates in \textit{exponentialTBSS}.

The paper is structured as follows. \secname~\ref{sec:TBSS} describes the structure of TBSSs and provides conditions for controlling the Family-Wise Error Rate (FWER). \secname~\ref{sec:method} introduces our three TBSS procedures to analyze time-to-event outcomes. \secname~\ref{sec:simulations} compares our proposed procedures to the current TBSS implementation in realistic simulation settings. \secname~\ref{sec:application} considers a database study that compares exposure to two classes of glucose-lowering medications in adults with type 2 diabetes. \secname~\ref{sec:discussion} concludes with a brief discussion.


\section{Tree-based scan statistics}\label{sec:TBSS}
TBSSs simultaneously analyze hierarchically related outcomes. The input data for TBSSs are usually data summaries (e.g., the sufficient statistics of a parametric model) defined at the lowest level of the hierarchy of interest. We refer to this hierarchy as a `tree', and the nodes in the lower level as terminal nodes or leaves of the tree. In general, this tree can be any directed acyclic graph that specifies how the data summaries can be aggregated meaningfully. Formally, the tree can be defined by the tuple $\mathcal T = (\mathcal G, \mathcal E)$, where $\mathcal{G}$ is a set of nodes connected by directed edges $\mathcal E$. The edges point from finer to coarser scales, i.e., from the leaves toward the root of the tree. The set of leaves, or terminal nodes, $\mathcal L \subseteq \mathcal G,$ is the set of all the nodes that do not have edges pointing at them. For each node $g \in \mathcal G$, we also define the set $\mathcal L_g \subseteq \mathcal L$  of all leaves connected, directly or through other nodes, to the node $g.$

For a generic TBSS method, data summaries $D^{(\ell)}$ are collected at the leaf level of the tree for each $\ell \in \mathcal L$. Their dimension and types depend on the specific TBSS analysis. Importantly, data for different leaves are considered realizations of independent random variables, and the data summaries $D^{(\ell)}$ are typically sufficient statistics of the parametric model for the leaf-level data. For example, if we assume that the leaves are distributed as independent Poisson random variables~\citep[as in][]{huybrechts:2021}, we typically have observed events ($O_\ell$), e.g., number of events for an index drug, and expected events ($E_\ell$), e.g.,  number of events for a comparator drug, as summaries for each leaf, meaning $D^{(\ell)} = (O_\ell, E_\ell)$. The data stored in non-leaf nodes are defined by `aggregating' the data of the leaves that are connected to that node. In the Poisson case, we have $D^{(g)} =  (O_g, E_g)$, where $O_g  = \sum_{\ell \in \mathcal{L}_g}O_\ell$ and $ E_g = \sum_{\ell \in \mathcal{L}_g} E_\ell.$  Note that in the classical Poisson TBSS, the number of expected counts for each node of the tree ($E_g$ for $g \in \mathcal{G}$) is considered fixed.  In practice, we often define the counts using only incident events; see \secname~\ref{sec:incident} in the supplementary materials for discussion.

TBSSs test the global null hypothesis that there are no outcome nodes in the hierarchical tree for which the number of events for the index drug differs from that of the comparator drug. For example,  in our Poisson example, we can test if there are nodes where the average number of observed counts differs from the expected counts. In other words, we test the null hypothesis $H_0 : \{\mathbb E[O_g] = E_g, \mbox{ for all } g \in \mathcal{G}\}$ against the alternative $H_1 : \{\mathbb E[O_{g'}] \neq E_{g'}, \mbox{ for some } g' \in \mathcal{G}\}.$  We also use the notation $H_{0g}=\{\mathbb E[O_g] = E_g\}$ and $H_{1g} = \{\mathbb E[O_{g}] \neq E_{g}\}$ to indicate node-specific hypotheses.

To test the global null $H_0,$ for each node $g \in \mathcal G$, TBSSs first compute the statistics $T_g = T_g (D^{(g)})$ that are functions of the data for the corresponding nodes. Typically, $T_g$ is based on the log-likelihood ratio statistic (LLR) of a parametric model for the data defined at leaf-level nodes, but other statistics---measuring the discrepancy between $H_{0g}$ and $H_{1g}$---can be used.   In the Poisson example, we can use the LRT
$T_g = (E_g -O_g) + O_g \times  (\log O_g - \log E_g ).$  The test statistic for $H_0$ is the maximal observed disproportion in the tree,   $T = \max_{g \in \mathcal G} T_g.$ The distribution of $T$ under $H_0$ is typically not available in closed form but can be approximated via Monte Carlo (MC) simulation. For Poisson TBSS~\citep{kulldorff:2003}, the null distribution of  $T$ is approximated via simulation under the global null. In other words, data replicates are obtained by sampling leaf-level data summaries under $H_0$ as $O_\ell \overset{H_0}{\sim} \mbox{Poisson}(E_\ell), $ for $\ell \in \mathcal L,$ and then aggregated using the hierarchy $\mathcal{T}.$

The p-value for the null hypothesis $H_0$ is defined as the tail area probability ${\tt pv} =\mbox{pr}(T > T^{\text{obs}} \mid H_0),$ where $T^{\text{obs}}$ is the statistic computed on the observed data. In addition, node-specific p-values are defined via ${\tt pv}_{g} = \mbox{pr}(T > T_g^{\text{obs}} \mid H_0),$ where $T_g^{\text{obs}}$ is the statistic $T_g$ computed on the observed data for node $g \in \mathcal{G}$. By definition,  ${\tt pv} = \min_{g \in \mathcal G} \{{\tt pv}_g\},$ i.e., the p-value for $H_0$ is the smallest of the node-specific p-values.  Moreover, the p-values $ \{{\tt pv}_{g}\}_{g \in \mathcal G}$ can be used to test if, for a particular node, the number of observed events is significantly different (or larger) than the expected number of events.  We can define as {\em statistical alerts} all nodes with a p-value lower than a pre-specified threshold $\alpha$---typically $0.05$. Alerts often include nodes that share common leaves, meaning they are on the same tree path.

\subsection{Family-wise error rate control}
For TBSSs, weak control of the FWER---i.e., the probability of having a false positive result under the global null hypothesis \citep[e.g.,][Section 1.2.1]{dickhaus:2014}---directly follows from the definition of p-values. However, to the best of our knowledge, the literature has not discussed whether and how the algorithm controls for false positives in node-specific tests.

A challenge in characterizing the FWER of TBSSs is that the signal propagates across the tree. How this propagation happens depends on which node-specific hypotheses are considered. As a general case, we consider two random variables $O_g$ and $E_g$  for each node  $g \in \mathcal{G}$, and we are interested in testing $O_g \overset{\text{d}}{=} E_g,$ where `$\overset{\text{d}}{=}$' indicates equality in distribution. To illustrate signal propagation, consider an example with three nodes $\mathcal{G} = \{1,2,3\}$, whereby the leaves $1$ and $2$ are connected to node $3$, i.e., $O_3 = O_1 + O_2,$ and $E_3= E_1 + E_2$. We differentiate four cases:
\begin{enumerate}
\item $O_1 \overset{\text{d}}{=} E_1$ and $O_2 \overset{\text{d}}{=} E_2:$ This  implies that $O_3 \overset{\text{d}}{=} E_3.$
\item $O_1 \overset{\text{d}}{=} E_1$ and $O_2 \overset{\text{d}}{\neq} E_2:$ This implies that $O_3 \overset{\text{d}}{\neq} E_3.$
\item   $O_1 \overset{\text{d}}{\neq} E_1$ and $O_2 \overset{\text{d}}{\neq} E_2,$ but  either
    $O_1 \overset{\text{d}}{\neq} E_2$ or $O_2 \overset{\text{d}}{\neq} E_1:$ This implies that $O_3 \overset{\text{d}}{\neq} E_3.$
    \item $O_1 \overset{\text{d}}{\neq} E_1$ and $O_2 \overset{\text{d}}{\neq} E_2,$ but  $O_1 \overset{\text{d}}{=} E_2$ and $O_2 \overset{\text{d}}{=} E_1:$ This implies that $O_3 \overset{\text{d}}{=} E_3.$ 
\end{enumerate}
Cases 1--3 are intuitive as they follow intersection and union operations: If node 3 is a parent node of only nodes that are under the null, then node 3 is under the null. In contrast, if at least one of the child nodes of node 3 is under the alternative, then node 3 is also under the alternative. Case 4 is more problematic. In our setting that compares the safety profiles of two drugs, case 4 implies that the index drug is as protective (harmful) for node one as the comparator drug is harmful (protective) for node two with the same strength, and vice versa.  This symmetry cancels the two signals in nodes one and two. Case 4 is possible but arguably unlikely in many disease settings.

We restrict our attention to trees $ \mathcal{T}$ where only cases 1--3 are possible. Equivalently,  we consider FWER control only on sets $\mathcal{I}_0$ following union and intersection operations.  To characterize the set $\mathcal I_0$, we first define its complement $\mathcal I_1$ that includes the nodes such that the alternative hypothesis holds. Since we exclude case 4, we assume signals propagate in the tree, and if a node $g \in \mathcal I_1$ and $g^\prime$ is a parent of $g$, then $g^\prime \in \mathcal I_1$ as well. The set $\mathcal{I}_0 = \mathcal G \setminus \mathcal{I}_1$ is defined as the complement of $\mathcal{I}_1$ (i.e., $\mathcal{I}_0 \cap \mathcal{I}_1 = \mathcal{G}$). An implication of this construction is described in Assumption~\ref{as:coherence}, formalizing a logical relationship between the node-specific hypotheses.
\begin{assumption}\label{as:coherence}
For all nodes $g \in \mathcal{G},$  $H_{0g}$ holds if and only if  $\cap_{\ell \in \mathcal L_g} H_{0\ell}$ holds. In other words, a null hypothesis holds if and only if all the descendants' null hypotheses also hold.
\end{assumption}
In the context of comparative safety, Assumption~\ref{as:coherence} implies that if we assume that the index drug does not increase the risk of a certain outcome node (e.g., {\em Systolic congestive heart failure}), there will also be no increase in risk in all finer-scale nodes that compose this outcome (e.g., {\em Acute systolic congestive heart failure}, {\em Acute on chronic systolic congestive heart failure}, and others). Note that Assumption \ref{as:coherence} refers to the true unknown null hypotheses rather than the actual decisions based on a TBSS. 
\begin{prop}\label{prop:strong_FWER}
When  Assumption \ref{as:coherence} holds,
 the decisions $\{\indicator\{{\tt pv}_{g} \leq \alpha\}\}_{g \in \mathcal G}$ to reject node-specific null hypotheses $\{H_{0g}\}_{g \in \mathcal{G}}$ controls the FWER for all sets $\mathcal I_0 \subseteq \mathcal G$.
\end{prop}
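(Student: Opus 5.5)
The plan is to prove strong FWER control by reducing it to the global (weak) control that follows from the definition of the p-values, applied to a sub-tree. Fix an arbitrary configuration of true and false hypotheses. Let $\mathcal I_0$ be the set of true nulls and $\mathcal I_1=\mathcal G\setminus\mathcal I_0$. We must show that $\pr(\exists g\in\mathcal I_0:\ {\tt pv}_g\le\alpha)\le\alpha$.

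The first step uses Assumption~\ref{as:coherence}. For $g\in\mathcal I_0$, every leaf in $\mathcal L_g$ satisfies its null. Let $\mathcal L_0=\bigcup_{g\in\mathcal I_0}\mathcal L_g$. Then every node $g\in\mathcal I_0$ aggregates only leaves in $\mathcal L_0$, and all leaf-level nulls hold on $\mathcal L_0$. Because the leaf data $D^{(\ell)}$ are independent across leaves, the joint law of $\{D^{(g)}\}_{g\in\mathcal I_0}$ depends only on $\{D^{(\ell)}\}_{\ell\in\mathcal L_0}$. That law is exactly the law it would have under the global null $H_0$, since under $H_0$ the leaves in $\mathcal L_0$ have the same null distribution and are independent of the other leaves. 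Hence the random vector $(T_g)_{g\in\mathcal I_0}$ has the same distribution under the true data-generating mechanism as under $H_0$.

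The second step bounds the event of a false rejection. The null reference distribution of $T=\max_{g\in\mathcal G}T_g$ is computed under $H_0$ and does not depend on the true configuration. Write $F(t)=\pr(T>t\mid H_0)$, which is nonincreasing. Then ${\tt pv}_g=F(T_g^{\text{obs}})$, so
\[
\min_{g\in\mathcal I_0}{\tt pv}_g=F\Bigl(\max_{g\in\mathcal I_0}T_g^{\text{obs}}\Bigr)\ \ge\ F\bigl(T^{\ast}\bigr),
\]
where $T^{\ast}$ is a random variable with the $H_0$-law of $\max_{g\in\mathcal G}T_g$. The inequality holds because, by the first step, $\max_{g\in\mathcal I_0}T_g$ has the $H_0$-law of a maximum over a subset of nodes, and is therefore stochastically dominated by $T$ under $H_0$. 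Combining this with the standard fact that $\pr(F(T)\le\alpha\mid H_0)\le\alpha$ (the super-uniformity of tail-area p-values, which also gives weak control) yields
\[
\pr\Bigl(\min_{g\in\mathcal I_0}{\tt pv}_g\le\alpha\Bigr)\le\alpha.
\]
This is strong FWER control, and it holds for every $\mathcal I_0$.

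I expect the first step to be the main obstacle. One must argue carefully that Assumption~\ref{as:coherence}, which concerns node-level hypotheses, actually pins down the full joint distribution of the relevant leaf summaries, and not merely their marginal equality in distribution. For example, in the $O$-versus-$E$ formulation one needs the leaf null to specify the leaf data law completely, as with $O_\ell\sim\mbox{Poisson}(E_\ell)$. One also needs the simulation scheme (parametric, bootstrap or permutation) to generate replicates whose law on $\mathcal L_0$ matches the true law. For permutation or estimated-parameter schemes this matching holds only approximately or conditionally, and I would state it as the operating requirement, noting where the independence of leaves is used. The monotonicity step relies on the fact that a maximum over a subset of nodes is almost surely at most the maximum over all nodes, computed on the same $H_0$-distributed replicate.
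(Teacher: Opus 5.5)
Your proposal is correct and follows essentially the same route as the paper's proof. In both, Assumption~\ref{as:coherence} plus independence of the leaves makes the law of $(T_g)_{g\in\mathcal I_0}$ coincide with its global-null law, $\max_{g\in\mathcal I_0}T_g\le\max_{g\in\mathcal G}T_g$ holds on a common null replicate, and the level-$\alpha$ calibration of $T$ closes the argument. You work with the tail function $F$ where the paper uses the critical value $c_\alpha$, and your leaf set $\mathcal L_0=\bigcup_{g\in\mathcal I_0}\mathcal L_g$ is the right one, whereas the paper's $\mathcal L(\mathcal I_0)$ is written with an intersection.

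The one thing to make explicit is the super-uniformity step. With the strict-inequality definition ${\tt pv}_g=\pr(T>T_g^{\text{obs}}\mid H_0)$, the bound $\pr(F(T)\le\alpha\mid H_0)\le\alpha$ requires $T$ to be continuous under $H_0$, or else the p-value must be defined with $\geq$. Continuity is exactly the assumption the paper imposes.
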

An important consequence of Proposition~\ref{prop:strong_FWER} is that when using node-specific hypotheses, the probability of finding a statistical alert (node with $ \texttt{pv}_g \leq \alpha$) that is unrelated to the signal is lower than $\alpha.$ Therefore, the practice of selecting areas of future investigation by inspecting all the nodes with small p-values retains multiplicity control if the signal propagates as in Assumption~\ref{as:coherence}.


\section{Tree-based scan statistics with time-to-event data}\label{sec:method}
We consider settings where time-to-event outcomes are measured and available for the analysis. In particular, for each leaf $\ell \in \mathcal L$ of the tree we are given data for $n_\ell$ individuals. As for other TBSSs, we assume that data for the individuals in different leaves are distinct and can be considered realizations of independent random variables. We indicate with $A^{(\ell)}_i\in \{1,0\}$ if individual $i$ in leaf $\ell$ is part of the exposure  ($A^{(\ell)}_i = 1$) or comparator ($A^{(\ell)}_i = 0$) group, and with  $Y^{(\ell)}_i \geq 0$ the time-to-event for leaf (outcome) $\ell \in \mathcal L$. In survival studies, data are typically right-censored, and rather than the actual time when an event occurs, we observe times $U^{(\ell)}_i = \min\{Y^{(\ell)}_i, C^{(\ell)}_i\}$ and event indicators $\delta^{(\ell)}_i \in \{0,1\}$, where $C^{(\ell)}_i$ is the censoring time, $\delta^{(\ell)}_i =1$ for observed events, and $\delta^{(\ell)}_i = 0$ for censored events. As in many survival studies, we assume that censoring is non-informative~\citep[][Sections 3.2]{kalbfleisch:2002}. This allows one to conduct likelihood-based inference by defining a likelihood where uncensored individuals contribute to the likelihood with their probability density functions, while censored individuals contribute to the likelihood with their survival functions (e.g., Equation~\eqref{eq:processLik}). We use $D^{(\ell)}_{\text{full}} = \{U^{(\ell)}_i,\delta^{(\ell)}_i,A^{(\ell)}_i\}_{i=1}^{n_\ell}$ to indicate the \textit{full} data for all the leaves $\ell \in \mathcal{L},$ as opposed to the input data $D^{(\ell)}$ that only contains appropriate data summaries. Full data for a node  $g\in \mathcal{G},$ $D^{(g)}_{\text{full}} = \{U^{(g)}_i,\delta^{(g)}_i, A^{(g)}_i\}_{i=1}^{n_g},$ are obtained by stacking individual-level data for the subjects in leaves in $\mathcal{L}_g$ that are connected to node $g,$ with $n_g = \sum_{\ell \in \mathcal L_g} n_\ell.$

Most survival models can be written in a unified form using point process notation~\citep[e.g.,][Section 1.7]{kalbfleisch:2002}. We define $N^{(g)}_i(t) = \indicator\{U^{(g)}_i \leq t, \delta^{(g)}_i = 1\},$  the counting process registering the number of observed events (0 or 1) until time $t$ for any individual $i \in \{1,\ldots,n_g\}$ in node $g \in \mathcal G.$ The intensity process of $N^{(g)}_i(t)$ is given by
\begin{equation}
\lambda^{(g)}_i(t) = 
\begin{cases}
    \alpha^{(g)}_{0}(t) \indicator\{U^{(g)}_i \geq t\}, \mbox{ if } A^{(g)}_i = 0,\\
    \alpha^{(g)}_{1}(t) \indicator\{U^{(g)}_i \geq t\}, 
     \mbox{ if } A^{(g)}_i = 1,
\end{cases}
\label{eq:intensity}
\end{equation}
where  $\alpha^{(g)}_{a}(t)$ are the hazard rates at time $t$ for an individual in exposure group $a \in \{0,1\}$ in node g $\in \mathcal G.$  We denote with $\tau$ the  maximum length of follow-up for the study such that $U^{(g)}_i \leq \tau$ for $i=1,\ldots,n_g$ and $g \in \mathcal G.$ For a general counting process with intensity process given by Equation~\eqref{eq:intensity}, the likelihood is 
\begin{equation}
\prod_{i=1}^{n_g}
\prod_{0 \leq t \leq \tau }[\lambda^{(g)}_i(t)]^{\Delta N^{(g)}_i(t)} \exp\left\{ - \int_{0}^\tau \sum_{i=1}^{n_g} \lambda^{(g)}_i(t)dt\right\},
\label{eq:processLik}
\end{equation}
where ${\Delta N^{(g)}_i(t)} =N^{(g)}_i(t) - N^{(g)}_i(t-) $ is the increment of the counting process at time $t.$ Intensities  $\lambda^{(g)}_i(t)$ can be defined using a variety of functional forms for the hazard rates $\{\alpha^{(g)}_{a}(t)\}.$  Our proposed survival TBSSs---presented in the next sections---consider the node-specific null hypotheses $H_{0g}: \{\alpha^{(g)}_1(t)  = \alpha^{(g)}_0(t)\}$ for $g \in \mathcal G.$  They differ in their specification for $\{\alpha^{(g)}_{a}(t)\}.$, which leads to different approaches to inference.

\subsection{The Cox TBSS} \label{sec:cox}
In this section, we consider a semi-parametric model for each node of the tree (\emph{coxTBSS}). This method requires access to full patient-level data. For each node $g \in \mathcal{G}$ in the tree, \emph{CoxTBSS} estimates a proportional hazard model~\citep{Cox:1972}, and assumes
\begin{equation}
\alpha^{(g)}_i(t) =  
\kappa^{(g)}_0 (t) \times [ \HR_g]^{A^{(g)}_i},\; \mbox{ for } i = 1 ,\ldots, n_g, \; \mbox{and } g \in \mathcal{G},
\label{eq:ph}
\end{equation}
where $\kappa^{(g)}_0 (t)$ is an arbitrary non-specified baseline hazard function, while $\HR_g$ is the hazard ratio comparing exposed to non-exposed individuals. If the exposure for a node $g \in \mathcal G$ does not affect the hazard, then the parameter $\HR_g=1.$ Therefore, we can consider the null hypothesis $H^{\mbox{\tiny  cox}}_{0g} :\{\HR_g =1 \}$ against the alternatives $H^{\mbox{\tiny  cox}}_{1g} :\{\HR_g \neq 1 \}$ for $g\in \mathcal{G}$, or the corresponding directional hyphotheses. Note that for non-terminal nodes $g$, data from multiple leaves are aggregated, and an event can occur in any leaf connected to $g$, where $\ell \in \mathcal L_g$. For each level of the tree, we estimate fewer models with a broader definition of the outcomes and a larger sample size per model.

In models of the form of Equation~\eqref{eq:ph}, the parameter $\HR_g$ for each node $g \in \mathcal{G}$ can be estimated without specifying a functional form for the baseline hazard $\kappa^{(g)}_0 (t)$ by maximizing the log partial likelihoods
\begin{equation}
\logL^{\mbox{\tiny  cox}}_g (\HR_g) = 
\sum_{i: \delta^{(g)}_i=1}
{A^{(g)}_i}
\log\{
\HR_g\} -
\sum_{i: \delta^{(g)}_i=1}
\log\left\{ \sum_{\{j: U^{(g)}_j \geq U^{(g)}_i\}} \HR^{A^{(g)}_j}_g\right\},\mbox { for }
g\in \mathcal{G},
\label{eq:ph_ll}
\end{equation}
and maximum likelihood estimate $\widehat \HR_g = \arg\max_{\HR_g > 0} \logL^{\mbox{\tiny  cox}}_g (\HR_g).$ Ignoring the tree structure, $H_{0g}$ can be tested using the (partial) log-likelihood ratio test 
\begin{equation}
T^{\mbox{\tiny  cox}}_g = 
2\times\{
\logL^{\mbox{\tiny  cox}}_g (\widehat \HR_g)
- 
\logL^{\mbox{\tiny  cox}}_g (1)
\}, 
\label{eq:ph_lrt}
\end{equation}
which (if considering a single node) is asymptotically distributed as a chi-square random variable with one degree of freedom under the null distribution~\citep{cox:1975}.

To derive node-specific p-values and account for the tree structure of the outcome, we avoid making assumptions about the node-specific baseline hazard function and propose a permutation scheme (\algname~\ref{alg:coxTbss}).
A standard permutation algorithm assumes exchangeability among individuals and considers the global null hypothesis that the distributions of adverse events between the two exposure groups are equal, implying $H^{\mbox{\tiny  cox}}_{0g} =\{\HR_g =1 \}$ for all $g \in \mathcal G.$ In non-randomized studies (such as the ones considered here), individuals are rarely exchangeable. However, they might be approximately exchangeable given a set of covariates.  In a typical TBSS analysis, important covariates are used to compute a propensity score, which is then used to adjust for confounding. If a matching strategy is used (see \secname~\ref{sec:simulations}), permutations can be stratified between matched sets. More broadly, the quantiles of the propensity score can be used to define strata of approximately exchangeable individuals, and permutations can occur within these strata (see \secname~\ref{sec:application}).
\begin{algorithm}
\setstretch{0.9}
\caption{Permutation algorithm for \emph{coxTBSS}}\label{alg:coxTbss}
\SetKwInOut{Input}{Input}\SetKwInOut{Output}{Output} \SetKwInOut{Procedure}{Procedure}

\Input{}
\BlankLine
Patient-level data including:\\
\hspace{5pt} 1) 
leaf code for the outcome $\ell \in \mathcal L$\\
\hspace{5pt} 2) 
time to event $\{U^{(\ell)}_i\}$\\
\hspace{5pt} 3) 
censoring status $\{\delta^{(\ell)}_i\}$\\
\hspace{5pt} 4) exposure indicator, $\{A^{(\ell)}_i\}$.\\
Node list $\mathcal G$ \\ 
Map of each node to the corresponding leaves $\mathcal I_g$ for $g \in \mathcal G$\\
Number of Monte Carlo replicates $R$

\Output{}
\BlankLine
Test statistic $T$ \\
node-specific test statistics $\{T_g\}$ for $g \in \mathcal G$ \\
node-specific p-values $\{\texttt{pv}_g\}$ for $g \in \mathcal G$

\BlankLine \BlankLine
\Procedure{} \BlankLine

\For{$r = 1,\ldots,R$ }{
\vspace{0.1cm}
\noindent\textbf{1.} for each leaf $\ell \in \mathcal L$ generate data under the null hypothesis by permuting the exposure indicator of the original data. Permutations should be stratified only to permute approximately exchangeable individuals, such as matched pairs.\\
\vspace{0.2cm}
    \noindent\textbf{2.}  
    With the data in 1., compute the hazard ratio maximizing Equation~\eqref{eq:ph_ll}
    and the statistic $T^{(r)}_g$ of Equation~\eqref{eq:ph_lrt} for each node in $\mathcal G,$ and let $T^{(r)} = \max_{g \in \mathcal G} \{T^{(r)}_{g}\}.$
}
\noindent\textbf{3.} Compute the p-values for each node as follows:\\
    \For{$g \in \mathcal {G}$}{
     $$
      {\tt pv}_{g} = \frac{1+\sum_{r=1}^{R} \indicator\{ T^{(r)} \geq   T_{g} \}}{R+1}.
     $$
}
\end{algorithm}
%
%
\subsection{The exponential survival TBSS}\label{sec:exp}
In this section, we describe a parametric TBSS to analyze survival data (\emph{exponentialTBSS}). The \emph{exponentialTBSS} algorithm assumes exponentially distributed time-to-event data for the events defined at the leaves of the tree, implying a constant hazard rate for each node of the tree.  Unlike \emph{coxTBSS} (\secname~\ref{sec:cox}), \emph{exponentialTBSS} requires only a few data summaries for inference, namely person time and number of events for each leaf of the tree.

Using the counting process notation, we assume that the intensity functions are constant over time, meaning $\alpha^{(g)}_a (t) = \alpha^{(g)}_a$ for $a \in \{0,1\}$ and $t \in [0,\tau]$. This corresponds to letting (uncensored) time-to-events $Y^{(\ell)}_i | A^{(\ell)}_i =a$ be distributed according to an exponential random variable with mean $[a^{(\ell)}_a]^{-1}$, where $a \in \{0,1\}$ and $\ell \in \mathcal{L}$. Under the assumption that the censoring mechanism is non-informative, the log-likelihood for the parameters $\boldsymbol \alpha^{(\ell)} = (\alpha^{(\ell)}_0,\alpha^{(\ell)}_1)$ for $\ell \in \mathcal L$ is given by
\begin{equation}
\logL^{\mbox{\tiny exp}} ( \boldsymbol \alpha^{(\ell)}) =
\left\{
\sum_{a=0,1}\log(\alpha^{(\ell)}_a) \sum_{i: A_i =a} {\delta^{(\ell)}_i}
   -\sum_{a=0,1}\alpha^{(\ell)}_a \sum_{i: A_i =a} U^{(\ell)}_i
   \right\}.
\label{eq:exp_lik}
\end{equation}
The sufficient statistics for the inference on $\boldsymbol \alpha^{(\ell)}$ are given by $S^{(\ell)}_a = (R^{(\ell)}_a, W^{(\ell)}_a),$ where $R^{(\ell)}_a = \sum_{i=1}^{n_\ell} \delta^{(\ell)}_i \indicator\{A^{(\ell)}_i=a\}$ is the total number of observed events at the end of the study, and $W^{(\ell)}_a = \sum_{i=1}^{n_\ell} U^{(\ell)}_i \indicator\{A^{(\ell)}_i=a\}$ is the total survival (or person) time for exposed ($a=1$) and unexposed ($a=0$) individuals. The maximum likelihood estimates (MLEs) of the HRs for each leaf are $\widehat \alpha^{(\ell)}_a = R^{(\ell)}_a/ W^{(\ell)}_a$, where $a \in \{0,1\}$. Sufficient statistics can be computed for each node of the tree by summing up the statistics for the corresponding leaf nodes, in particular $R^{(g)}_a = \sum_{\ell \in \mathcal{L}_g}R^{(\ell)}_a$ and $W^{(g)}_a = \sum_{\ell \in \mathcal{L}_g} W^{(\ell)}_a$. These statistics can in turn be used to define node-specific estimates of  $\widehat \alpha^{(g)}_a = R^{(g)}_a/W^{(g)}_a$ for $a \in \{0,1\}$ and $g \in \mathcal{G}$.

For each node $g \in \mathcal G$, we test the null hypothesis $H^{\mbox{\tiny exp}}_{0g}: \{ \alpha^{(g)}_1 = \alpha^{(g)}_0 \}$ against the alternative $H^{\mbox{\tiny exp}}_{1g}: \{ \alpha^{(g)}_1 \neq \alpha^{(g)}_0 \}$ or corresponding directional hypotheses. As test statistics, we can use the LRT, defined for the leaf-level data (Equation~\eqref{eq:exp_lik}), and use the same functional form for the node-specific test statistics
\begin{equation}
T^{\mbox{\tiny exp}}_g =  2 \left\{\sum_{a=0,1} 
R^{(g)}_a  \times \log\left( \frac{R^{(g)}_a}{W^{(g)}_a} \right) - 
(R^{(g)}_0 + R^{(g)}_1) \times \log\left( \frac{R^{(g)}_0 + R^{(g)}_1}{W^{(g)}_0 + W^{(g)}_1} \right) 
\right\},
\label{eq:lrt_exp}
\end{equation}
for $g \in \mathcal{G},$ and their maximum as a test statistic for the global null hypothesis.
The next step is to propose an algorithm to  approximate the distributions of Equation~\eqref{eq:lrt_exp}
under the global null $H_0$ of no effect in any node of the tree. In the considered exponential model, under the global null, time-to-events are distributed as $Y^{(\ell)}_i | A^{(\ell)}_i = a \overset{H_0}{\sim} \text{Exp}(\alpha^{(\ell)}_{H_0})$ for $i \in \{1,\ldots,n_\ell \}$ and $a \in \{0,1\}$, where the parameters $\alpha^{(\ell)}_{H_0}$ can be consistently estimated via  $\widehat \alpha^{(\ell)}_{H_0}  =(R^{(\ell)}_0 + R^{(\ell)}_1)/(W^{(\ell)}_0 + W^{(\ell)}_1)$ for $\ell \in \mathcal L.$

For inference with TBSSs, we are not interested in individuals' time-to-events or censoring times but in the distribution of the statistics in Equation~\eqref{eq:lrt_exp} and their maximum under the global null. Samples from this distribution can be generated using the fact that, under $H_0$, the counting processes for the exposure $a\in \{0,1\}$ corresponding to the log-likelihood of Equation~\eqref{eq:exp_lik} are two independent homogeneous Poisson processes with the same intensity $\alpha^{(\ell)}_{H_0}$. This implies that, conditioning on the total person times $W^{(\ell)}_a$ for $a \in \{0,1\}$ and $\ell \in \mathcal L$, the number of events
$R^{(\ell)}_a \overset{H_0}{\sim} \mbox{Poisson}(\alpha^{(\ell)}_{H_0} \times W^{(\ell)}_a)$ for $a \in \{0,1\}$. The algorithm to perform \emph{exponentialTBSS} inference can be found in \algname~\ref{alg:exp_surv}.
\begin{algorithm}
\setstretch{0.9}
\caption{Bootstrap algorithm for \emph{exponentialTBSS}}\label{alg:exp_surv}
\SetKwInOut{Input}{Input}\SetKwInOut{Output}{Output} \SetKwInOut{Procedure}{Procedure}

\Input{}
\BlankLine
Leaf-level data summaries for exposure groups $a=0,1$ and $\ell \in \mathcal L$ \\
\hspace{5pt} 1) total person times  $W^{(\ell)}_a = \sum_{i=1}^{n_\ell} U^{(\ell)}_i \indicator\{A^{(\ell)}_i=a\}$
\\
\hspace{5pt} 2) total number of observed events  $R^{(\ell)}_a = \sum_{i=1}^{n_\ell} \delta^{(\ell)}_i \indicator\{A^{(\ell)}_i=a\}.$
\\
Node list $\mathcal G$ \\ 
Map of each node to the corresponding leaves $\mathcal I_g$ for $g \in \mathcal G$\\
Number of Monte Carlo replicates $R$

\Output{}
\BlankLine
Test statistic $T$ \\
node-specific test statistics $\{T_g\}$ for $g \in \mathcal G$ \\
node-specific p-values $\{\texttt{pv}_g\}$ for $g \in \mathcal G$

\BlankLine \BlankLine
\Procedure{} \BlankLine

\For{$r = 1,\ldots,R$ }{

\noindent\textbf{1.} for each leaf $\ell \in \mathcal L$ generate data under the null hypothesis:\\
       \For {$\ell \in \mathcal {L}$}{
    $$
    R^{(r,\ell)}_a \sim \mbox{Poisson}\left(
    \widehat \alpha^{(\ell)}_{H_0} \times W^{(\ell)}_a\right), \mbox{ for } a=0,1.
    $$  
    with  $\widehat \alpha^{(\ell)}_{H_0}  =(R^{(\ell)}_0 + R^{(\ell)}_1)/(W^{(\ell)}_0 + W^{(\ell)}_1)$ for $\ell \in \mathcal L.$ 
}
    \noindent\textbf{2.}  Compute the statistic $T_g$ for each node in $\mathcal G$:\\
     \For{$g \in \mathcal {G}$}{
     \begin{align*}
     R^{(r,g)}_a &= \sum_{\ell \in \mathcal{L}_g}  R^{(r,\ell)}_a,\\
     T^{(r)}_g &= 2 \left\{ \sum_{a=0,1} 
R^{(r,g)}_a  \times \log\left( \frac{R^{(r,g)}_a}{W^{(g)}_a} \right)
-(R^{(r,g)}_0 + R^{(r,g)}_1) \times \log\left( \frac{R^{(r,g)}_0 + R^{(r,g)}_1}{W^{(g)}_0 + W^{(g)}_1} \right)
\right\}.
    \end{align*}
      }
    Let $T^{(r)} = \max_{g \in \mathcal G} \{T^{(r)}_{g}\}$.
}
\noindent\textbf{3.} Compute the p-values for each node:\\
    \For{$g \in \mathcal {G}$}{
     $$
      {\tt pv}_{g} = \frac{1+\sum_{r=1}^{R} \indicator\{ T^{(r)} \geq   T_{g} \}}{R+1}.
     $$
}
\end{algorithm}
Although modeling assumptions are different, \algname~\ref{alg:exp_surv} shares some similarity to some versions of the unconditional Poisson TBSS, where the `expected counts'---the counts for the exposure group---are scaled to account for differential person time in the exposure groups~\citep[e.g.][]{wang:2018,huybrechts:2021,suarez:2023}. These applications of unconditional Poisson TBSS assume that the expected counts are known based on population event rates. However, in practice they are often estimated from a comparator group with potentially limited sample size. In contrast,  \textit{exponentialTBSS} considers the events for both exposure groups as unknown and appropriately accounts for uncertainty in their estimates.  Failure to account for this uncertainty---i.e., using a `traditional' Poisson TBSS---might lead to inflated type-I error rates and/or loss of power. This issue with `traditional' Poisson TBSS has been highlighted, for example, in~\citet{thuy:2024}.

\subsection{Robust exponential TBSS}\label{sec:robust}
The method presented in \secname~\ref{sec:exp} assumes that hazard rates for all nodes in the tree are constant over time or, equivalently, that time-to-events are exponentially distributed. If patient-level data are available, we can test this assumption and potentially consider alternative parametric distributions for the leaf-level time-to-events (e.g., log-normal or Weibull distributions). 

Instead of implementing a plethora of parametric survival TBSSs, we propose a robust inference procedure that uses the \emph{exponentialTBSS} as a working model rather than a correctly specified model. We call this method \emph{robustTBSS}. Our proposed model leverages robust inference results for survival analysis presented, for example, in \citet{hjort:1992}.

When we erroneously assume that the hazard rates $\{\alpha^{(g)}_a(t), a=0,1\}$ for a certain node $g \in \mathcal{G}$ and times $t \in [0,\tau]$ are constant over time, the estimates $\widehat \alpha^{(g)}_a = R^{(g)}_a/W^{(g)}_a$ converge in probability to
$
\dot \alpha^{(g)}_a = 
{\int_{0}^\tau \alpha^{(g)}_a(s) y^{(g)}(s) ds}
/{\int_{0}^\tau y^{(g)}(s)ds}
$
as $n_g \to \infty,$ where $\lim_{n_g \to \infty } \sum_{i=1}^{n_g} \indicator\{A^{(g)}_i = a\} = \infty$ for $a=0,1$ and $y^{(g)}(s) = \pr(U^{(g)}_1 \geq s)$ is the limit in probability of  $n^{-1}_g \sum_{i= 1}^{n_g} I\{U_i \geq t\}$. This result can be found in \cite[Example~2.1]{hjort:1992}.

The limits $\dot \alpha^{(g)}_a$ are weighted averages of the true hazard rates and can be used for testing purposes. In fact, rejecting the hypothesis that $\dot \alpha^{(g)}_0 =\dot \alpha^{(g)}_1$ for a node $g \in \mathcal {G}$ is still a well-posed problem, irrespective of the functional form of the hazard rate.  However, if the model is not correctly specified, parametric bootstraps, such as \algname~\ref{alg:exp_surv}, which assume the leaf-level models are correctly specified, do not estimate the variance correctly. Consequently, power might decrease, or the type-I error rate might be inflated.

Even when the exponential survival model is not correctly specified, the following result holds for the leaf estimates of the hazard rates:
\begin{equation}
\sqrt{n_\ell}(\widehat \alpha^{(\ell)}_a - \dot \alpha^{(\ell)}_a) \overset{d}{\to} \mathcal{N}\left(0, \frac{K^{(\ell)}_a}{[J^{(\ell)}_a]^2}\right), \mbox{ for } a=0,1
\label{eq:asymp_rob}
\end{equation}
as $n_{\ell} \to \infty$  and  $\lim_{n_\ell \to \infty } \sum_{i=1}^{n_\ell} \indicator\{A^{(\ell)}_i = a\} = \infty$. Here $\overset{d}{\to}$ indicates convergence in distribution, 
$$
J^{(\ell)}_a = [\dot \alpha^{(\ell)}_a]^{-1}
\int_{0}^{\tau} y^{(\ell)}(s) \alpha^{(\ell)}(s) ds,
$$
and
$$
K^{(\ell)}_a = [\dot \alpha^{(\ell)}_a]^{-1}
\int_{0}^{\tau} y^{(\ell)}(s) \alpha^{(\ell)}(a) ds +
2[\dot \alpha^{(\ell)}_a]^{-2} \int_{0}^{\tau} \int_{0}^t y^{(\ell)}(s)\{ \alpha^{(\ell)}(s) -\dot \alpha^{(\ell)}_a\}ds dt, 
$$
leading to an estimate of the asymptotic variances of $\widehat\alpha^{(\ell)}_a$ given by
$$
\mathbb{V}\mbox{ar}(\widehat \alpha^{(\ell)}_a) = 
\frac{
[\widehat \alpha^{(\ell)}_a]^2}{[R^{(\ell)}_a/n^{(\ell)}_a]^2}
\frac{1}{\widehat \alpha^{(\ell)}_a} \sum_{i : A^{(\ell)} =a} \left(\frac{\delta^{(\ell)}_i}{\widehat \alpha^{(\ell)}_a} - U^{(\ell)}_i\right)^2,
$$
where $n^{(\ell)}_a = \sum_{i=1}^{n_\ell} \indicator\{A^{(\ell)}_i =a\}$ is the number of individuals in exposure group $a \in \{0,1\}$.

Rather than directly using Equation~\eqref{eq:asymp_rob}, we first reparameterize the HRs in Equation~\eqref{eq:exp_lik} with $\beta^{(g)}_0 = -\log\{ \alpha^{(g)}_0\}$ and $\beta^{(g)}_1 = -(\log\{ \alpha^{(g)}_1\} - \log\{ \alpha^{(g)}_0\}).$ These are the coefficients of an exponential survival regression model, where $\beta^{(g)}_0$ are intercepts, and $\beta^{(g)}_1$ are the deviations from the intercept for exposed individuals on the log scale. Using this parameterization to implement \textit{robustTBSS} is convenient. First, it avoids generating a negative average number of events, which could happen if the approximation in Equation~\eqref{eq:asymp_rob} is used with a small sample size. Additionally, it simplifies the generation of data replicates under the global null hypothesis. In fact, the null hypotheses that exposure does not affect the outcome variable for node $g\in \mathcal{G}$ can be written as the hypothesis $H^{\mbox{\tiny rob}}_{0g} = \{\beta^{(g)}_1 =0\}$ against the alternatives $H^{\mbox{\tiny rob}}_{1g} = \{\beta^{(g)}_1 \neq 0\}.$ Directional alternatives could be considered. The algorithm proceeds by first generating the coefficients  $(\beta^{(g)}_0, \beta^{(g)}_1)$ under the global null, and then by mapping this estimate to the number of observed events. Details are described in \algname~\ref{alg:robust}.
\begin{algorithm}
\setstretch{0.9}
\caption{Bootstrap algorithm for \emph{robustTBSS}}\label{alg:robust}
\SetKwInOut{Input}{Input}\SetKwInOut{Output}{Output} \SetKwInOut{Procedure}{Procedure}

\Input{}
\BlankLine
Leaf-level data summaries for exposure groups $a=0,1$ and $\ell \in \mathcal L$ \\
\hspace{5pt} 1) number of individuals $n^{(\ell)}_a,$\\
\hspace{5pt} 2) total person times $W^{(\ell)}_a= \sum_{i=1}^{n_\ell} U^{(\ell)}_i \indicator\{A^{(\ell)}_i=a\},$\\
\hspace{5pt} 3) total number of observed events $R^{(\ell)}_a =  \sum_{i=1}^{n_\ell} \delta^{(\ell)}_i \indicator\{A^{(\ell)}_i=a\},$\\
\hspace{5pt} 4) sum of squares of the person time $W^{(\ell,2)}_a=\sum_{i=1}^{n_\ell} [U^{(\ell)}_i]^2 \indicator\{A^{(\ell)}_i=a\},$\\
\hspace{5pt} 5) person time among the uncensored individuals $O^{(\ell)}_a =\sum_{i=1}^{n_\ell} \delta^{(\ell)}_i U^{(\ell)}_i \indicator\{A^{(\ell)}_i=a\}$\\

Node list $\mathcal G$ \\ 
Map of each node to the corresponding leaves $\mathcal I_g$ for $g \in \mathcal G$\\
Number of Monte Carlo replicates $R$

\Output{}
\BlankLine
Test statistic $T$ \\
node-specific test statistics $\{T_g\}$ for $g \in \mathcal G$ \\
node-specific p-values $\{\texttt{pv}_g\}$ for $g \in \mathcal G$

\BlankLine \BlankLine
\Procedure{} \BlankLine

\For{$r = 1,\ldots,R$ }{

\noindent\textbf{1.} For each leaf $\ell \in \mathcal L$ generate data under the null hypothesis:\\
       \For {$\ell \in \mathcal {L}$}{
$$
(\beta^{(r,\ell)}_0,\beta^{(r,\ell)}_1) \sim \mathcal{N}_2 \left( (\widehat \beta^{(\ell)}_0,0),  J^{(\ell)}K^{(\ell)}J^{(\ell)} \right),
$$
where 
$
J^{(\ell)}_{11} = 
e^{-\widehat{\beta}_0^{(\ell)}}[ W^{(\ell)}_0  +
e^{- \widehat{\beta}_1^{(\ell)}} W^{(\ell)}_1]
,\;\;\;$ 
$J^{(\ell)}_{12} =J^{(\ell)}_{21} = J^{(\ell)}_{22} =   e^{-\widehat{\beta}_0^{(\ell)}} W^{(\ell)}_0,$ \newline
$\;$ 
\newline
and
$
K^{(\ell)}_{11} = 
n^{(\ell)} +
e^{-\widehat{\beta}_0^{(\ell)}}\left\{e^{-\widehat{\beta}_0^{(\ell)}}[
W^{(\ell,2)}_0 +
 W^{(\ell,2)}_1 e^{-2\widehat{\beta}_1^{(\ell)}} 
]
-2
[
O^{(\ell)}_0 +
O^{(\ell)}_1e^{-\widehat{\beta}_1^{(\ell)}}
]\right\},
$
$
K^{(\ell)}_{12} = 
K^{(\ell)}_{21} = 
K^{(\ell)}_{22} = 
n^{(\ell)}_1 +  e^{-\widehat{\beta}_0^{(\ell)} -\widehat{\beta}_1^{(\ell)} }[
 e^{-\widehat{\beta}_0^{(\ell)} -\widehat{\beta}_1^{(\ell)} } W^{(\ell,2)}_1  -2  O^{(\ell)}_1 ].
$

$\;$

Then set     
    $
    R^{(r,\ell)}_0 = \exp(-\beta^{(r,\ell)}_0) \times W^{(\ell)}_0, \;\;
    R^{(r,\ell)}_1 = \exp(-\beta^{(r,\ell)}_0 - \beta^{(r,\ell)}_1) \times W^{(\ell)}_1.
    $  
}
    \noindent\textbf{2.}  Compute the statistic $T_g$ for each node in $\mathcal G$:\\
     \For{$g \in \mathcal {G}$}{
     \begin{align*}
     R^{(r,g)}_a &= \sum_{\ell \in \mathcal{L}_g}  R^{(r,\ell)}_a,\\
     T^{(r)}_g &=  2 \left\{   \sum_{a=0,1} 
R^{(r,g)}_a  \times \log\left( \frac{R^{(r,g)}_a}{W^{(g)}_a} \right)
-(R^{(r,g)}_0 + R^{(r,g)}_1) \times \log\left( \frac{R^{(r,g)}_0 + R^{(r,g)}_1}{W^{(g)}_0 + W^{(g)}_1} \right)
\right\}.
    \end{align*}
      }
       Let $T^{(r)} = \max_{g \in \mathcal G} \{T^{(r)}_{g}\}.$
}
\noindent\textbf{3.} Compute the p-values for each node:\\
    \For{$g \in \mathcal {G}$}{
     $$
      {\tt pv}_{g} = \frac{1+\sum_{r=1}^{R} \indicator\{ T^{(r)} \geq   T_{g} \}}{R+1}.
     $$
}
\end{algorithm}


\section{Simulation study}\label{sec:simulations}
To investigate the performance of the proposed survival TBSSs, we considered a comparative safety study with two drugs: an index ($a=1$) and a comparator ($a=0$) drug. To mimic a realistic study, we consider the following steps: We generate a cohort of patients, fit a propensity score (PS) model, perform one-to-one PS matching, and then apply TBSS analysis to the matched data~\citep[e.g.,][]{wang:2018}.
\subsection{Generating simulated data}
For the patient population, we refer to patients’ characteristics with $\textbf{X}^{(\ell)}_i$ for characteristics that are available for the analysis, and $\textbf{Z}^{(\ell)}_i$ for unmeasured characteristics that can act as confounding factors, where $i=1,\ldots,n_\ell$ and $\ell \in \mathcal L.$

We generate patient-level data using, for each leaf $\ell \in \mathcal{L}$ and exposure group $a=0,1$,  the following steps:
\begin{enumerate}
    \item Sample patient characteristic  $(\textbf{X}^{(\ell)}_i,\textbf{Z}^{(\ell)}_i)\sim F_{X,Z}$ from a distribution $F_{X,Z},$ for $i=1,\ldots,n_\ell$ and $\ell \in \mathcal L.$
    
    \item Assign each patient to treatment $A^{(\ell)}_i \in \{0,1\}$ by sampling the indicator from a binomial distribution with probability $[1 + \exp\{-(\eta_1 +[\textbf {X}^{(\ell)}_i]^\intercal  \boldsymbol \zeta_1   +  [\mathbf Z^{(\ell)}_i]^{\intercal} \boldsymbol \gamma_1)\}]^{-1}.$ Here, the parameter $\eta_1$ controls the proportion of treated/untreated in the population, while $\boldsymbol \zeta_1$ and $\boldsymbol \gamma_1$ express the effect of measured and unmeasured confounding, respectively, on the treatment allocation mechanism. 
   
    \item Sample time-to-event data $Y^{(\ell)}_i$ for  $i=1,\ldots, n_\ell$ from a Weibull distribution with  shape parameter $\kappa^{(\ell)}_a$ and scale $\lambda^{(\ell)}_a = \lambda_0/\exp\{[\mathbf{X}^{(\ell)}_i]^{\intercal} \boldsymbol \zeta_0  + [\mathbf Z^{(\ell)}_i]^{\intercal} \boldsymbol \gamma_0 + \omega^{(\ell)} A^{(\ell)}_i\}.$ The parameter $\omega^{(\ell)}$ induces an ``exposure effect'' for the leaf $\ell \in \mathcal L$,  while $\boldsymbol \zeta_0$ and $\boldsymbol \gamma_0$ express the effect of measured and unmeasured confounding, respectively, on the time-to-event. In \secname~\ref{sec:composite} of the supplementary material, we also discuss a different strategy for inducing the exposure effect at higher-level nodes.
    
    \item Sample time-to-censoring data $C^{(\ell)}_i$ for  $i=1,\ldots, n_\ell$ from a Weibull distribution with  shape parameter $\kappa^{(\ell)}_a$ (as in step 3) and scale parameters $\gamma^{(\ell)}_a = ([\lambda^{(\ell)}_a]^{\kappa^{(\ell)}_a} p_a/(1-p_a))^{1/\kappa^{(\ell)}_a}.$ This choice leads to an average proportion of censored event $p_a \in (0,1).$
    
    \item Set the observed time $U^{(\ell)}_i = \min(Y^{(\ell)}_i, C^{(\ell)}_i)$ and the censoring indicator $\delta^{(\ell)}_i = \indicator\{Y^{(\ell)}_i \leq C^{(\ell)}_i\}.$
\end{enumerate}

By changing the simulation parameters $(\kappa^{(\ell)}_a,  \eta_1, \boldsymbol  \zeta_1,\boldsymbol  \gamma_1, \lambda_0, \boldsymbol{\zeta}_0, \boldsymbol  \gamma_0, \omega^{(\ell)},p_a)$, we can create broadly different scenarios with different distributions of time to event, number of observed events, and differences in distribution between the exposure groups. Parameters for all the simulation scenarios considered in this section are reported in \tablename~\ref{tab:mode_and_deriv} of the supplementary material together with other implementation details. Before analyzing the data with various TBSSs, we pre-process the generated data to adjust for measured confounding via a one-to-one PS matching; see supplementary material \secname~\ref{sec:ps_matching} for the details.

\subsection{Simulation details and metrics.}
We consider five TBSS methods to analyze the matched data: 
\begin{enumerate}
\item \textit{bernoulliTBSS:} After PS-matching for each matched set, only the events that have occurred from the index date to the shortest follow-up time within the matched set are considered for analysis. The others are discarded. This ensures that exposure groups are comparable, even if the follow-up times differ between the groups. Refer to \citet{russo:2024} for a discussion on this point. 
\item \textit{poissonTBSS:} This is commonly employed for comparative studies in pharmacology~\citep[e.g.,][]{huybrechts:2021}. This version of TBSS tests if, for each node in the tree, the number of observed events differs from the number of events for a comparator group (appropriately scaled).   Unlike Bernoulli TBSS, no event is discarded because of differential follow-up, but the uncertainty in the expected number of events---estimated as the number of events in the comparator group---is ignored. 
\item \textit{coxTBSS}: The method described in  \secname~\ref{sec:cox}.
\item \textit{exponentialTBSS}: The method described in  \secname~\ref{sec:exp}.
\item \textit{robustTBSS}: The method described in  \secname~\ref{sec:robust}.
\end{enumerate}

For all the TBSS methods under consideration, we use a two-tailed alternative and a $5\%$ threshold to test the node-specific p-values. We compare the methods with three metrics. First, we consider the probability of rejecting the global null $H_0$, i.e., the probability of the event  $\{\pr(T > T^{\text{\small obs}}| H_0) \leq 0.05\}$. We refer to this metric as \textit{global power} (\figurename~\ref{fig:sim1}). In addition, we consider the proportion of \textit{false positives} (\figurename~\ref{fig:fdr}) and \textit{true positives} (\figurename~\ref{fig:tpr}), defined as the number of node-specific null hypotheses in the sets $\mathcal{I}_0$ and $\mathcal{I}_1$ (see \secname~\ref{sec:method}) divided by the total number of rejections. We also consider the runtime of the five methods (\figurename~\ref{fig:sim_time}).

In our simulations, we use three distributional assumptions for the survival time of the patient-level data. We assume time-to-events follow an exponential or Weibull distribution. As a third assumption, we use exponentially distributed time-to-events, but the exposure effect is delayed; hence, the proportional hazards assumption does not hold. To inject signals in the tree, we consider three settings. The first injects a signal only in one leaf node with a correctly specified PS model; the second considers the same signals but a mispecified PS-score model due to unmeasured confounders; the third injects signals in a higher-level node rather than a leaf (\secname~\ref{sec:composite}). Refer to supplementary \figurename~\ref{fig:sim_tree} and its caption for a more detailed description of the signal structure and a graphical representation of the tree used in the simulations.

For each of the nine combinations of signal and time-to-event distributions, we consider $20$ different values for `exposure effect' ($\omega^{(25)} \in \{0,\ldots,\log(2.5)\}$). For each of these $180$ scenarios, we generate $1,000$ datasets with $7800$ individuals,  $600$ for each of $13$ leaves. \tablename~\ref{tab:mode_and_deriv} in the supplementary material provides additional details.
\begin{figure}[ht!]
    \centering
    \includegraphics[width =\textwidth]{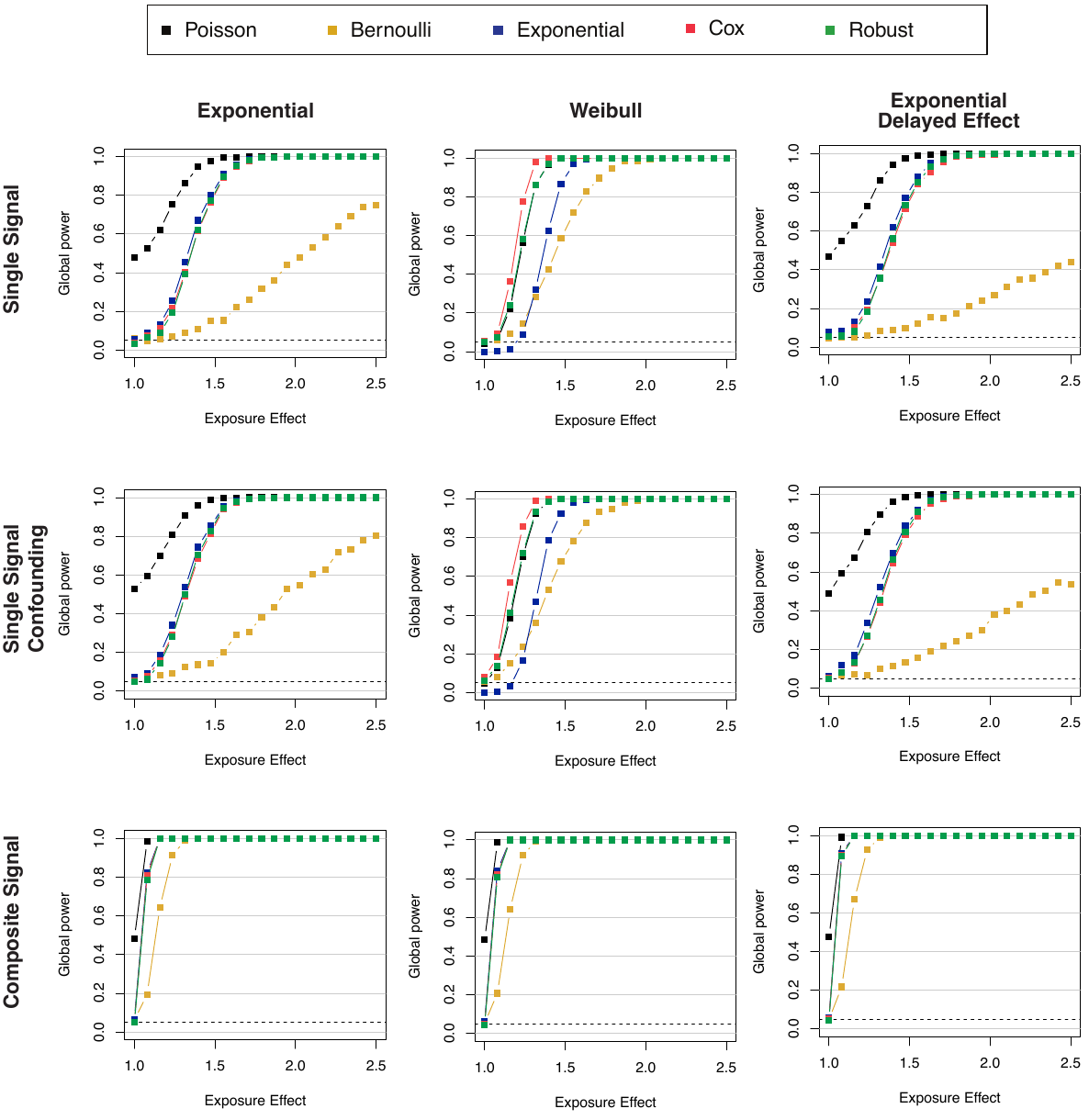}
    \caption{Proportion of times the global null hypothesis $H_0$ was rejected (global power) for the simulations described in \secname~\ref{sec:simulations}. Each column represents a distributional assumption for the time-to-event data: column 1 shows exponential time-to-event data, column 2 shows Weibull time-to-event data, and column 3 shows exponential time-to-event data with a delayed effect. The three rows are the three different simulation scenarios. In the first row, the treatment effect is added to a single leaf (Node-25 of the tree in \figurename~\ref{fig:sim_tree}), and the propensity score is correctly specified. In the second row, the propensity score model is mispecified because of unmeasured confounders, and in the third row, the exposure effect is included at a higher level (Node-6  in \figurename~\ref{fig:sim_tree}). The x-axis represents exposure effects $\exp\{\omega^{(25)}\}$ or the effect induced on Node-6 (third row). An exposure effect = 1 indicates no effect and corresponds to the type-I error of the global null hypothesis.
    }
    \label{fig:sim1}
\end{figure}

\subsection{Simulation Results}
\figurename~\ref{fig:sim1} shows the proportion of times the global null $H_0$ was rejected (global power) for different values of the exposure effect and data-generating mechanisms.  In the case of exponentially distributed time-to-event data (row 1, column 1), \textit{poissonTBSS} is severely anti-conservative, with a probability of rejecting the global null $H_0$  of approximately $48\%$. The other four approaches have a type-I error rate close to the nominal level of $5\%$.  The  \textit{exponentialTBSS}, \textit{robustTBSS}, and  \textit{coxTBSS} methods share similar performance also in terms of global power, with superimposed power curves for most values of the exposure effect. Off note, \textit{exponentialTBSS} is correctly specified in this scenario. The \textit{bernoulliTBSS}  has substantially lower global power than the other methods,  approximately $75\%$ for $\omega^{(25)} = \log(2.5)$---the largest considered signal---compared to $100\%$ for the other four methods. The poor performance of \textit{bernoulliTBSS} is common to all the considered simulations. This is likely due to the number of events that are discarded to ensure comparability in follow-up time between the two exposure groups, which, given the relatively small sample size considered in these simulations, considerably affects performance.

For Weibull distributed time-to-event data (row 1, column 2), all five methods achieve a type-I error lower than $5\%.$ The \textit{exponentialTBSS} algorithm---here misspecified---is conservative with a type-I error close to $0$ when there is no exposure effect and power considerably lower than  \textit{poissonTBSS}, \textit{coxTBSS}, and \textit{robustTBSS}. The \textit{coxTBSS} algorithm has higher power than \textit{robustTBSS}. For example, for $\omega^{(25)} = \log(1.30)$, \textit{coxTBSS} has a power of approximately $98\%$ compared to $86\%$ of \textit{robustTBSS}.

With delayed effect (row 1 column 3), \textit{poissonTBSS} is anti-conservative, with a probability of rejecting the global null when it is true of approximately $47\%$ compared to the nominal $5\%$. \textit{exponentialTBSS} has a type-I error of approximately $8\%$. \textit{coxTBSS}  and \textit{robustTBSS} have type-I error close to the nominal level, approximately $5.6\%$, while \textit{bernoulliTBSS} has a type-I error of approximately $ 5\%$. Overall, \textit{exponentialTBSS}, \textit{coxTBSS}, and  \textit{robustTBSS} have similar power curves, while \textit{poissonTBSS} is anti-conservative and  \textit{bernoulliTBSS} has lower power than the other approaches.

The second row in \figurename~\ref{fig:sim1} shows settings with a moderate amount of mispecification added to the PS model. All methods are relatively robust to moderate mispecification, with only a slight increase in false positive results. For example, with exponentially distributed data, \textit{poissonTBSS} has a type-I error rate of approximately $53\%$, compared to $48\%$ in the absence of unmeasured confounding. However, the relative comparisons among the methods remained similar to those in scenarios without unmeasured confounding, suggesting that mispecification of the PS model or the presence of unmeasured confounders affects all methods similarly.

The third row of \figurename~\ref{fig:sim1} considers scenarios with composite signals. Also in this case, \textit{poissonTBSS} has a large type-I error, approximately $47\%$ compared to the nominal $5\%.$ The remaining four methods closely control type-I error and have a similar power curve for increasing levels of the exposure effect parameters. \textit{bernoulliTBSS} has slightly lower power than the other methods. 

\figurename{s}~\ref{fig:fdr}~and~\ref{fig:tpr} show the proportion of false and true positive results, respectively. These metrics provide deeper insights into the properties of the considered methods. For example, although for a large value of the exposure effect, \textit{poissonTBSS} has a global power of $1,$ this is mostly due to erroneous identification of nodes in $\mathcal{I}_0$ rather than detection of true positives. For example, in row 1 column 1, the largest proportion of true positive results detected by \textit{poissonTBSS} is approximately $89\%$ while the remaining $11\%$ are false positive results. In the same setting, \textit{exponentialTBSS}, \textit{coxTBSS}, and \textit{robustTBSS} have a proportion of true positive results of approximately $100\%$ and a proportion of false positive results close to $0$.

\figurename~\ref{fig:sim_time} in the supplementary materials shows the computation time for the five methods under consideration. Execution time is primarily driven by the MC algorithm used to generate replicate data under the global null.  The fastest of the considered methods is \textit{poissonTBSS} with an average computational time of $5$ seconds ($s$) (standard deviation (SD) $1.65s$), followed by \textit{bernoulliTBSS} with an average computation time of $7s$  (SD  $2.06s$), \textit{exponentialTBSS} with $11s$ (SD $3.18s$), and \textit{robustTBSS} $12s$ (SD $3.05s$). As expected, the slower method is \textit{coxTBSS}, which requires a permutation algorithm and takes approximately $1324s$ on average (SD $427.50s$). Code to replicate the simulations is available at  \url{https://github.com/rMassimiliano/TBSS_survival_simulation}.


\section{Application}\label{sec:application}
We re-analyze data from a new-user, active-comparator cohort design described in \citet{russo:2025}. The study considers the safety profile of two glucose-lowering agents: the second-generation sulfonylureas (SUs) and dipeptidyl peptidase-4 (DPP-4i). DPP-4i acts as an active comparator, i.e., a drug used in similar settings that serves as the standard of care and helps mitigate bias related to the indication. Using an active comparator can reduce the detection of adverse events related to the underlying medical condition for which a drug is prescribed, rather than to the drug itself. Data for the study were extracted from Electronic Health Records of the Mass General Brigham hospital system between 2000 and 2020, linked to claims from fee-for-service Medicare (2007--2020), curated by the FDA Sentinel Real World Evidence Data Enterprise Initiative~\citep{desai:2024}. Potential adverse events have been extracted using diagnosis (ICD) codes, clinical note analysis via a Natural Language Processing algorithm~\citep{smith:2026}, and laboratory results (binarized using thresholds from clinical practice).
The study design is described in detail in \figurename~\ref{fig:study_design} and its caption. We also refer the reader to ~\citet{russo:2025} for a more detailed description. The design establishes a clear temporal sequence from treatment exposure to outcomes and enables capturing events occurring shortly after treatment initiation. In the analysis proposed here, we consider all adverse events recorded in inpatients and emergency settings. The outcomes are hierarchically related using the MedDRA hierarchy and comprise $2162$ nodes with at least one observed event, of which $695$ have at least two events for the exposure group (SUs users), at least one event for the comparator group (DPP-4i), and a larger rate of events for SUs than DPP-4i users.

There are two main differences with the modeling approach discussed in \secname~\ref{sec:simulations}. First, because of the limited sample size, we balanced initiators of each drug using propensity score fine stratification weighting~\citep {desai:2017} rather than matching. Fine stratification divides study participants into several strata (e.g., 50 or 100 strata) using quantiles of an estimated propensity score. Using the PS distribution among the exposed as reference, for each stratum, exposed individuals (SUs users) are weighted at 1, while ``unexposed'' individuals (DPP-4i users) receive a weight proportional to the ratio of exposed to unexposed individuals in that stratum.  In our context, if we consider $S$ strata and let $s_{i} \in \{1, \ldots, S\}$ be an indicator for the stratum for a generic individual $i$, the weights are defined as 
\begin{equation}
w_i  = 
  \indicator\{A_i =1\} + \indicator\{A_i =0\}
\frac{ N_{1,s_i} /N_{1} }{ N_{0,s_i} / N_{0} }
\label{eq:weights}
\end{equation}
where $N_{a,s_i}$ is the number of individuals in exposure group $a \in \{0,1\}$ and stratum $s_i\in \{1,\ldots, S\}$, while $N_a$ is the total number of individuals in exposure group $a$. If the propensity score accurately captures all the confounding factors,  weighted estimates from this approach consistently estimate the average treatment effect among the exposed~\citep{desai:2017}.

Second, in \cite{russo:2025}, not all the events recorded in the database are  ``incident'' and counted as adverse events for the analysis. Refer to \secname~\ref{sec:incident}  of the supplementary material. Specifically, only the events for study participants who did not experience the same events within the 180 days prior to the index date are considered incident outcomes.  We apply this definition separately to each node of the tree.

We adapt our \textit{coxTBSS} algorithm of Section~\ref{sec:cox} to analyze this data.  For fine stratification, in Step 1 of \algname~\ref{alg:coxTbss} we permute individuals within the same PS-stratum which, under a correctly specified PS-model, are approximately exchangeable, and then compute a weighted partial likelihood ratio test using the fine-stratification weights:
\begin{equation*}
\logL^{\mbox{\tiny  w-cox}}_g (\HR_g) = 
\sum_{i: \delta^{(g)}_i=1} w^{(g)}_i
{A^{(g)}_i}
\log\{
\HR_g\} -
\sum_{i: \delta^{(g)}_i=1} w^{(g)}_i
\log\left\{ \sum_{\{j: U^{(g)}_j \geq U^{(g)}_i\}} w^{(g)}_j\HR^{A^{(g)}_j}_g\right\},\mbox { for }
g\in \mathcal{G},
\end{equation*}
where $\widehat \HR_g = \arg \max_{\varphi > 0} \logL^{\mbox{\tiny  w-cox}}_g (\HR_g)$. We use $\logL^{\mbox{\tiny  w-cox}}_g$ as a statistic for TBSS, given by
$T^{\mbox{\tiny  w-cox}}_g = 
2\times\{
\logL^{\mbox{\tiny  w-cox}}_g (\widehat \HR_g)
- 
\logL^{\mbox{\tiny  w-cox}}_g (1)
\}.$ 

To account for the node-specific definition of incident outcomes, we appropriately define censoring indicators for each node in the tree. Because a claim diagnosis for a certain individual might be considered in the analysis or ignored depending on the node of the tree, the algorithms described in Sections~\ref{sec:exp}~and~\ref{sec:robust} are not immediately adaptable to this specific analysis. As a comparator, we consider a weighted exponential TBSS in which we compute replicates under the null using the same permutation algorithm as in \textit{coxTBSS}. The weighted exponential model uses the likelihood ratio in Equation~\eqref{eq:lrt_exp} with weighted statistics 
$R^{(g)}_a = \sum_{i=1}^{n_g} w^{(g)}_i\delta^{(g)}_i \indicator\{A^{(g)}_i=a\}$ and $W^{(g)}_a = \sum_{i=1}^{n_g} w^{(g)}_i U^{(g)}_i \indicator\{A^{(g)}_i=a\}$. This comparator uses the same statistic as in \citet{russo:2025} with a different weighting scheme. \citet{russo:2025} scaled the weights for the unaffected by multiplying by the ratio of exposed to unexposed.  Here, we use the definition of Equation~\eqref{eq:weights}.

In both analyses, an individual can experience multiple events. Because we permute exposure groups, this does not raise concerns about the permutation algorithm's validity, since each individual carries their full event history. However, to obtain a sensible estimate of node-specific statistics, we consider only one event (the earliest) for each node and individual. This avoids the double-counting of events.

\tablename~\ref{tab:app_res} shows the results for 11 outcome nodes. The reported nodes have been selected as the union of the ten nodes with the largest observed LRTs in the two analyses (that coincide) and the node  `Hypoglycaemia', an anticipated signal from previous studies~\citep{mishriky:2015}. The two methods yield consistent results in this analysis, with all nodes sharing the most extreme test statistics. None of the analyses show any alert (p-value $\leq0.05$). For \textit{exponentialTBSS}, the smallest p-value ($\sim0.07$) corresponds to ``Headache'', which has been identified as a potential alert in previous analyses~\citep{russo:2025}. For the \textit{coxTBSS}, the smallest p-value ($\sim0.06$) is for the same ``Headache'' MedDRA node. The second and third nodes also refer to other definitions of headaches, which are nonspecific symptoms of hypoglycemia, and plausible alerts in this analysis.

The lack of signal---particularly the failure to identify increased risk of hypoglycemia---most likely reflects the small sample size of this safety surveillance study (2820 new initiators of SU and 985 new initiators of DPP-4i). 

\begin{table}[h!]
\centering
\resizebox{\textwidth}{!}{
{\tiny
\begin{tblr}{colspec = {X[4,l,m] *{7}{X[l]}},
  		row{even} = {black!2!white},
  		row{odd} = {black!10!white},
        row{1,2} = {black!25!white,font=\bfseries},
        width=\textwidth}
Node description & MedDRA & Observed & Relative & Exp. & Exp. & Cox & Cox \\
& Level &  rate $\times$ 1000 & Risk & LRT & p-value & LRT & p-value \\
 
Headaches &  HLGT & 0.182 & 6.300 &  10.343 & 0.069 & 20.717 & 0.063 \\ 
Headache &  PT & 0.182 & 4.809 &  8.694 & 0.165 & 17.397 & 0.152 \\ 
Headaches NEC &  HLT & 0.179 & 4.736 &  8.465 & 0.183 & 16.957 & 0.176 \\ 
Death and sudden death &  HLT & 0.060 & 28.514 &  5.478 & 0.705 & 10.981 & 0.695 \\ 
Fatal outcomes &  HLGT & 0.060 & 28.514 &  5.478 & 0.705 & 10.981 & 0.695 \\ 
Death &  PT & 0.057 & 27.203 &  5.189 & 0.768 & 10.395 & 0.757 \\ 
Gastrointestinal signs and symptoms &  HLGT & 0.178 & 2.508 &  4.104 & 0.958 & 8.227 & 0.964 \\ 
Gait disturbance &  PT & 0.090 & 4.405 &  4.070 & 0.961 & 8.188 & 0.965 \\ 
Gait disturbances &  HLT & 0.098 & 3.731 &  3.835 & 0.978 & 7.723 & 0.981 \\ 
Gastrointestinal and abdominal pains (excl oral and throat) &  HLT & 0.197 & 2.216 &  3.711 & 0.986 & 7.498 & 0.985 \\ 
Hypoglycaemia &  PT & 0.095 & 2.174 &  1.738 & 1.000 & 3.448 & 1.000 
\end{tblr}}
}
\caption{Likelihood ratio tests and p-values for the permutation-based \textit{ExponentialTBSS} and \textit{CoxTBSS} described in \secname~\ref{sec:application}. The observed rates represent the weighted number of SUs divided by the weighted follow-up time multiplied by $1000$. We used the weights defined in \eqref{eq:weights}. MedDRA level abbreviations are:
High Level Group Term (HLGT), High Level Term (HLT), Preferred Term (PT).
} 
\label{tab:app_res}
\end{table}


\section{Discussion}\label{sec:discussion}
For disproportionality analysis based on database studies, time-to-event data are typically available. These data have been used to improve comparability of follow-up duration between the exposure group, but are often ignored at the analysis stage. We introduced and compared three methods that leverage time-to-event data for TBSS signal detection. Across various simulation scenarios, \textit{robustTBSS} offers a good trade-off between computational efficiency and signal-detection power. Additionally, it requires only a few data summaries, facilitating analysis when patient-level data are not readily accessible due to privacy or other constraints. 

Importantly, as previously noted in \citet{thuy:2024}, our simulations suggest that ignoring uncertainty in the estimates of the expected number of events in the traditional \textit{poissonTBSS} method---the current practice in Pharmacoepidemiology---might lead to a severe inflation of type-I error even in settings where no other biases influence the analysis. The magnitude of this inflation depends on the setting. The \textit{exponentialTBSS} algorithm proposed in this paper can be considered as a generalization of \textit{poissonTBSS} that appropriately accounts for uncertainty in the number of events for the comparator group. Since for comparative safety studies, \textit{poissonTBSS} and \textit{exponentialTBSS} require the same summary statistics, \textit{exponentialTBSS} should be preferred, particularly in settings with small sample sizes.

Our methods have some limitations. For example, in describing the methods, we followed the TBSS structure used in practice~\citep{kulldorff:2003}, which assumes that data at leaf-level nodes can be treated as realizations of independent random variables. This can be a limitation in studies where patients can indeed experience multiple events, and therefore be associated with multiple leaves. As noted in \secname~\ref{sec:application}, \textit{coxTBSS} can be directly adapted to this case by permuting individual exposure-group allocations and using appropriate test statistics. Co-occurring outcomes are still grouped together in each permutation,  reflecting potential biological or clinical relationships. In our case, for each node test statistic we used the first-occurring incident event for each outcome node when fitting the weighted Cox model, but other approaches that directly model the occurrence of multiple events can be considered.  In contrast, the methods proposed in \secname{s}~\ref{sec:exp} and \ref{sec:robust}, like other TBSSs, directly use the assumption that the leaf levels of nodes are independent. In some cases~\citep{huybrechts:2021}, it is possible to create an artificial leaf level in the `tree' that comprises only independent events.  These new leaf nodes should include summaries of all unique combinations of events observed in a given study. Replicates under the global null can then be generated from this artificial leaf, with the caveat that when summary statistics are aggregated, individuals should not be double-counted.  However, how to integrate node-specific washout into the methods proposed in \secname{s}~\ref{sec:exp} and \ref{sec:robust}, as well as the `traditional' \textit{poissonTBSS}, to avoid a permutation algorithm, remains an open question.

\section*{Acknowledgments}
This work has been supported by NIH grants R01HD110092, R01HD104646, R03AI178363. Computation was carried out on the Unity Cluster of the College of Arts and Sciences at the Ohio State University. The computational resources provided is gratefully acknowledged. Grammarly was used for grammar checking.

\newcommand{\beginsupplement}{%
        \setcounter{table}{0}
        \renewcommand{\thetable}{S\arabic{table}}%
        \setcounter{figure}{0}
        \renewcommand{\thefigure}{S\arabic{figure}}%
	\setcounter{section}{0}
        \renewcommand{\thesection}{S\arabic{section}}
     }

\section*{Supplementary Material}
\beginsupplement

\section{Proof of Proposition 1}

\begin{assumption}\label{as:coherence_S}
For all nodes $g \in \mathcal{G},$  $H_{0g}$ holds if and only if  $\cap_{\ell \in \mathcal L_g} H_{0\ell}$ holds. In other words, a null hypothesis holds if and only if all the descendants' null hypotheses also hold.
\end{assumption}

\begin{prop}\label{prop:strong_FWER_S}
When  Assumption \ref{as:coherence_S} holds,
 the decisions $\{\indicator\{{\tt pv}_{g} \leq \alpha\}\}_{g \in \mathcal G}$ to reject node-specific null hypotheses $\{H_{0g}\}_{g \in \mathcal{G}}$ controls the FWER for all sets $\mathcal I_0 \subseteq \mathcal G$.
\end{prop}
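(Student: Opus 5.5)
The plan is to show strong FWER control via the standard max-statistic (single-step maxT) argument, with Assumption~\ref{as:coherence_S} ensuring that the true-null set $\mathcal I_0$ is determined by a set of leaves on which the data behave as under the global null.

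\textbf{Step 1 (structure of $\mathcal I_0$).} Fix an arbitrary configuration of true and false hypotheses and let $\mathcal I_0$ be the set of true nulls. By Assumption~\ref{as:coherence_S}, $g\in\mathcal I_0$ if and only if every leaf in $\mathcal L_g$ is null. Let $\mathcal L_0=\mathcal L\cap\mathcal I_0$ be the set of null leaves. Then
\[
\mathcal I_0=\{g\in\mathcal G:\mathcal L_g\subseteq \mathcal L_0\}.
\]
Hence, for every $g\in\mathcal I_0$, the node data $D^{(g)}$ (or $D^{(g)}_{\text{full}}$) are aggregates of leaf data drawn only from null leaves.

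\textbf{Step 2 (subset pivotality).} Leaf data are independent across leaves. The joint law of $\{T_g\}_{g\in\mathcal I_0}$ therefore depends only on the leaf data in $\mathcal L_0$. That joint law coincides with the one obtained when the global null $H_0$ holds and the same leaves are aggregated. So $\max_{g\in\mathcal I_0}T_g$ under the true configuration has the same distribution as $\max_{g\in\mathcal I_0}T_g$ under $H_0$. Under $H_0$, this quantity is stochastically dominated by $T=\max_{g\in\mathcal G}T_g$.

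\textbf{Step 3 (bounding the FWER).} Let $F$ denote the null distribution of $T$. Since ${\tt pv}_g=\pr(T\ge T_g^{\rm obs}\mid H_0)$ is decreasing in $T_g^{\rm obs}$, a false rejection occurs if and only if
\[
\min_{g\in\mathcal I_0}{\tt pv}_g\le\alpha ,
\]
which is equivalent to $\max_{g\in\mathcal I_0}T_g^{\rm obs}\ge c_\alpha$, where $c_\alpha$ is the upper-$\alpha$ critical value of $F$. By Step 2,
\[
\mathrm{FWER}=\pr\Bigl(\max_{g\in\mathcal I_0}T_g\ge c_\alpha\Bigr)\le\pr(T\ge c_\alpha\mid H_0)\le\alpha .
\]
The last inequality is the super-uniformity of the p-value of $T$, which gives weak control, as noted in the text. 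For the Monte Carlo versions with $(1+\sum)/(R+1)$, I would replace $F$ by the exchangeability argument between the observed $\max_{\mathcal I_0}T_g$ and the replicates, which yields the same bound.

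\textbf{Main obstacle.} The delicate step is Step 2, because the reference distribution is not fixed. It is computed by a procedure (parametric bootstrap with plugged-in $\widehat\alpha^{(\ell)}_{H_0}$, or permutation) that uses all leaves, including non-null ones. I would first argue at the level of the population reference distribution $\pr(\cdot\mid H_0)$, treating it as known, as in the statement's definition of ${\tt pv}_g$. Then I would remark that for the permutation scheme, permuting within non-null leaves cannot decrease $T$ relative to permuting null leaves alone. The dependence of the max over $\mathcal G\setminus\mathcal I_0$ on non-null leaves only enlarges $T^{(r)}$, which makes the test conservative rather than anti-conservative. A rigorous treatment of the estimated-parameter case would be asymptotic, and I expect to state it only at that level.
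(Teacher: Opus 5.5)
Your proposal is correct and follows essentially the same argument as the paper. It reduces the FWER to $\pr(\max_{g\in\mathcal I_0}T_g\ge c_\alpha)$, uses Assumption~\ref{as:coherence_S} with independence across leaves to replace the true configuration by the global null for the nodes in $\mathcal I_0$, and bounds the result by $\pr(T\ge c_\alpha\mid H_0)\le\alpha$ because $\mathcal I_0\subseteq\mathcal G$; your characterization $\mathcal I_0=\{g:\mathcal L_g\subseteq\mathcal L_0\}$ is a cleaner version of the paper's leaf set $\mathcal L(\mathcal I_0)$, and your remarks on the Monte Carlo, permutation and bootstrap versions go beyond the paper's proof, which treats the null distribution of $T$ as known and continuous.
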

\begin{proof}
Let $\mathcal I_0 \subseteq \mathcal{G}$ be the unknown set of true null hypotheses and $\mathcal L(\mathcal I_0) = \{ \cap_{g \in \mathcal I_0} \mathcal L_g  \}$ the indices of the corresponding leaves. Without loss of generality, we assume that the random variable $T= \max_{g \in \mathcal{G} } T_g$ is continuous and let $
c_\alpha$ be the $\alpha$-level quantile of the distribution of $T $ under the global null, that is $c_\alpha$ satisfies
$
\pr\{T \geq c_\alpha | H_0\} = \alpha. 
$
By definition the decision $\indicator\{{\tt pv}_{g} \leq \alpha\}$ is equivalent to $\indicator\{T_{g} \geq c_\alpha\}.$

Under Assumption \ref{as:coherence_S}, we have that: (1) $\{\cap_{g \in \mathcal I_0} H_{0g}\}$ is equivalent to $\{\cap_{\ell \in \mathcal L(\mathcal I_0)} H_{0\ell}\},$ (2)
because the data in different leaves are independent, the joint distribution of $\{T_g\}_{g \in \mathcal I_0}$ is independent of the distribution of $\{T_\ell\}_{\ell \notin \mathcal L(\mathcal I_0)}.$ Using these facts, we can write the FWER as
\begin{align*}
\mbox{FWER}(\mathcal I_0) &= \pr\{\mbox{Reject at least one } H_{0g} \mbox{ for } g \in \mathcal I_0 | \cap_{g \in \mathcal I_0} H_{0g} \} \\
&= \pr\{ \cup_{g \in \mathcal I_0}\indicator\{{\tt pv}_g \leq \alpha\}   | \cap_{g \in \mathcal I_0} H_{0g} \} \\
&=  \pr\{ \cup_{g \in \mathcal I_0}\indicator\{T_g  \geq c_\alpha \}  | \cap_{g \in \mathcal I_0} H_{0g} \}  \\
&= \pr\{ \max_{g \in \mathcal I_0}{T_g}  \geq c_\alpha  | \cap_{g \in \mathcal I_0} H_{0g} \}  \\
&=\pr\{ \max_{g \in \mathcal I_0}{T_g}  \geq c_\alpha  | \cap_{\ell \in \mathcal L(\mathcal I_0)} H_{0\ell} \}  &\mbox{ (using (1))}\\
&=\pr\{ \max_{g \in \mathcal I_0}{T_g}  \geq c_\alpha  |  \cap_{\ell \in \mathcal L} H_{0\ell} \}  &\mbox{ (using (2))}\\
&\leq  \pr\{ \max_{g \in \mathcal{G}}{T_g}  \geq c_\alpha  |  \cap_{\ell \in \mathcal L} H_{0\ell} \}  &(\mbox{because } I_0 \subseteq \mathcal G) \\
&\leq  \alpha . & \mbox{ (defintion of } c_\alpha) 
\end{align*}
Control of the FWER for any $\mathcal I_0 \subseteq \mathcal G$ directly follows from the fact that the set $\mathcal I_0$ is arbitrary.
\end{proof}
\section{Supplemental tables and figures}
\begin{table}[ht!]
\centering
\resizebox{0.8\textwidth}{!}{
    \begin{tblr}{
  colspec = {ccl},
  row{even} = {black!2!white},
  row{odd} = {black!10!white},
  row{1} = {black!25!white}
}  
{\bf Aspect} & {\bf Parameters} & { \bf Details} \\
{Number of individual \\for each leaf}& $n^{(\ell)}$&  $n^{(\ell)}=600,$ for a total sample size of $7800$ \\
{Measured  \\ and \\ unmeasured \\confounders} & $F_{XZ}$ &{Two measured confounders:\\
 $\circ$ $X_{1} \in \{0,1\}$ with $\mbox{pr}\{X_1 =1\} = 0.8$  \\
   $\circ$ $X_{2}$ uniform in $ [0,1]$. \\
and one unmeasured confounder\\
$\circ$ $Z_{1} \in \{0,1\}$ with $\mbox{pr}\{Z_1 =1\} = 0.7$  
}
\\
{Treatment assignment \\ PS model}& 
{
$\eta_1$ \\
$\boldsymbol \zeta_1$   \\
$\gamma_1$
}
& 
{The intercept $\eta_1$ varies across scenario. \\
It is selected to have, on average, \\ $50\%$ exposed and $50\%$ unexposed. \\
coefficients for $X_1$ and $X_2$
$\boldsymbol \zeta_1 = (1,1)$\\
The coefficient for unmeasured confounding $Z$\\ $\gamma_1 =0$ for the scenarios correctly specified PS-model, and \\$\gamma_1 =0.7$ for mispecified PS-model.
}
\\
{Time to event \\ generation} & 
{
$\kappa^{(\ell)}_a$\\
$\lambda_0$\\
$\boldsymbol{\zeta}_0$\\
$\boldsymbol  \gamma_0$ }&
{
$\circ$ $\kappa^{(\ell)}_a=1$ for  exponentially distributed and time-delayed time to event scenarios \\
\phantom{$\circ$} and $\kappa^{(\ell)}_a=2$ for Weibull time to events\\
$\circ$ baseline scale $\lambda_0=2$\\
$\circ$   $\boldsymbol \zeta_0 =(0.5,0.5)$ \\
$\circ$   $\gamma_0= 0.3$ \\
}
\\
Censoring & $p_a$ &  $p_a = 0.2,$ for $ a\in \{0,1\}$\\
Treatment effect & $\omega^{(\ell)}$&
{
For the first two treatment scenarios \\ 
we induce an effect in leave $25$ \\
$\circ$  $\omega^{(25)} \in \{0,\cdots,\log(2.5)\},$ and $0$ otherwise.\\
For the third simulation scenario, we let \\
$\circ$  $ \omega^{(13)} = \omega^{(24)} =  \omega^{(25)}   \in \{0,\cdots,\log(2.5)\},$ and $0$ otherwise.\\
In this scenario, the parameter $\omega^{(\ell)}$\\ controls the treatment assignment rather than the time-to-event.\\
For the scenario with delayed effect (e.g., third column in Figure 2 of the main manuscript) \\
the parameter $\omega^{(\ell)}$ is greater than $0$ only for patients with $Y_i > \texttt{delay}.$
}\\
Delayed effect &$\texttt{delay}$& 
{$\circ$ $1$ for scenarios with single signal\\ 
$\circ$  $3$ for scenarios with composite signal  } \\
{TBSS MC algorithm \\ number of replicates}& & {9999 for all TBSSs considered in the paper}&\\
{PS caliper \\ }& & We use $0.2$ times the standard deviation of the PS&\\
\end{tblr}
}
\caption{Details of simulation study in \secname~\ref{sec:simulations} of the main manuscript}
    \label{tab:mode_and_deriv}
\end{table}

\begin{figure}[ht!]
    \centering
    \includegraphics[width=\textwidth]{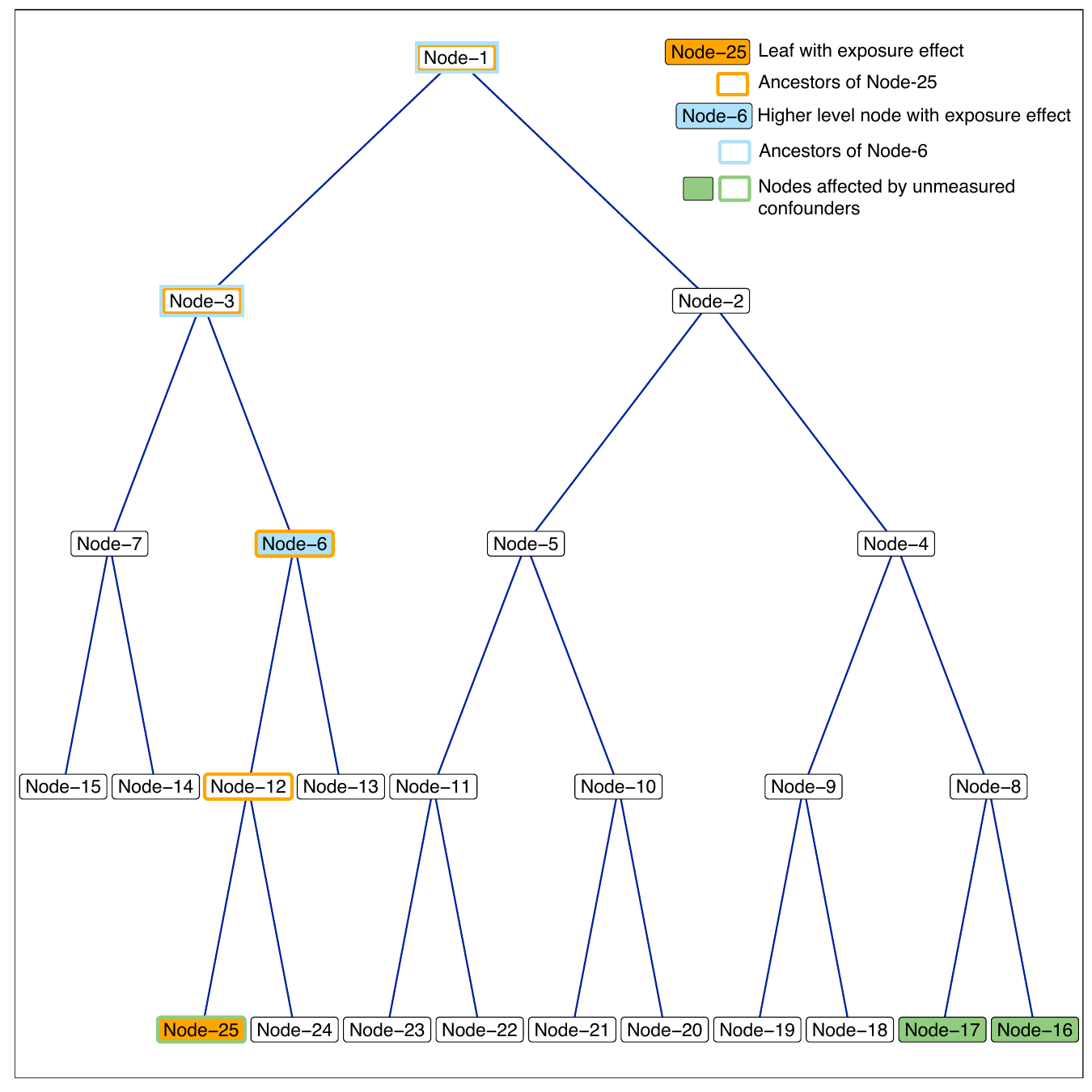}
\caption{Tree used for the simulations presented in \secname~\ref{sec:simulations} of the main manuscript. Node-25 (in orange) is used to inject exposure effects in scenarios where only a leaf is affected, while Node-6 (in blue) is used in scenarios where the exposure effect is injected in a higher-level node. Nodes in the path of Node-25 and Node-6 are circled with the corresponding color. We can expect signal propagation through these nodes. Nodes-16, Nodes-17, and Nodes-25 (in green) are also affected by unmeasured confounding when the PS-model used to create a matched cohort is mispecified.}
    \label{fig:sim_tree}
\end{figure}

\begin{figure}[ht!]
    \centering
    \includegraphics[width=\textwidth]{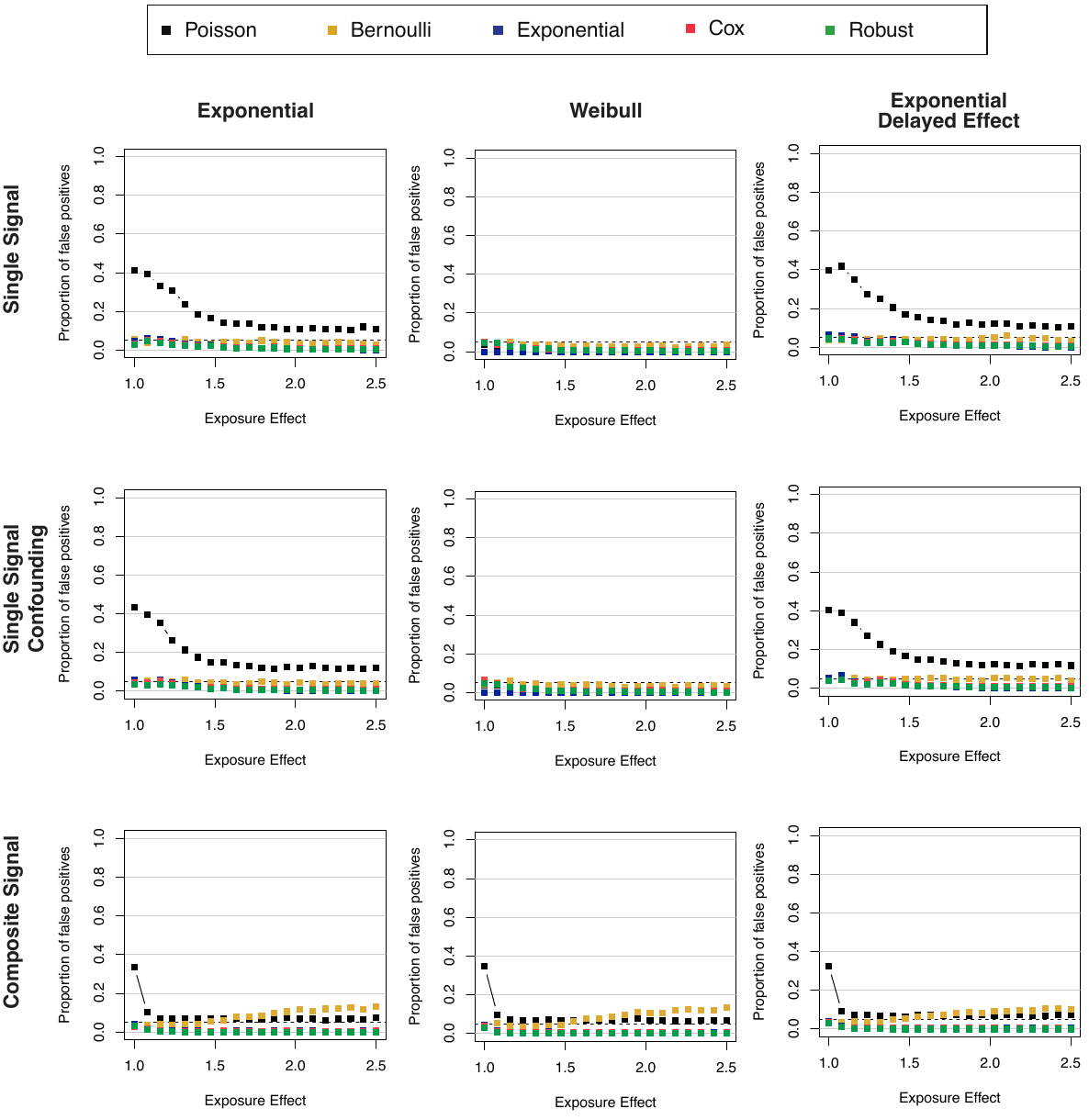}
    \caption{Proportion of nodes in $\mathcal I_0$ rejected (false positive)
	    for the simulations described in \secname~\ref{sec:simulations}. Each column represents a distributional assumption for the time-to-event data: column 1 shows exponential time-to-event data, column 2 shows Weibull time-to-event data, and column 3 shows exponential time-to-event with delayed effect.  The three rows are the three different simulation scenarios. In the first row, the treatment effect is added to a single leaf (Node-25 of the tree in \figurename~\ref{fig:sim_tree}), and the propensity score is correctly specified. In the second row, the propensity score model is mispecified because of unmeasured confounders, and in the third row, the exposure effect is included at a higher level (Node-6  in \figurename~\ref{fig:sim_tree}). The x-axis represents exposure effects $\exp\{\omega^{(25)}\}$ or the effect induced on Node-6 (third row). An exposure effect = 1 indicates no effect.
    }
    \label{fig:fdr}
\end{figure}

\begin{figure}[ht!]
    \centering
    \includegraphics[width=\textwidth]{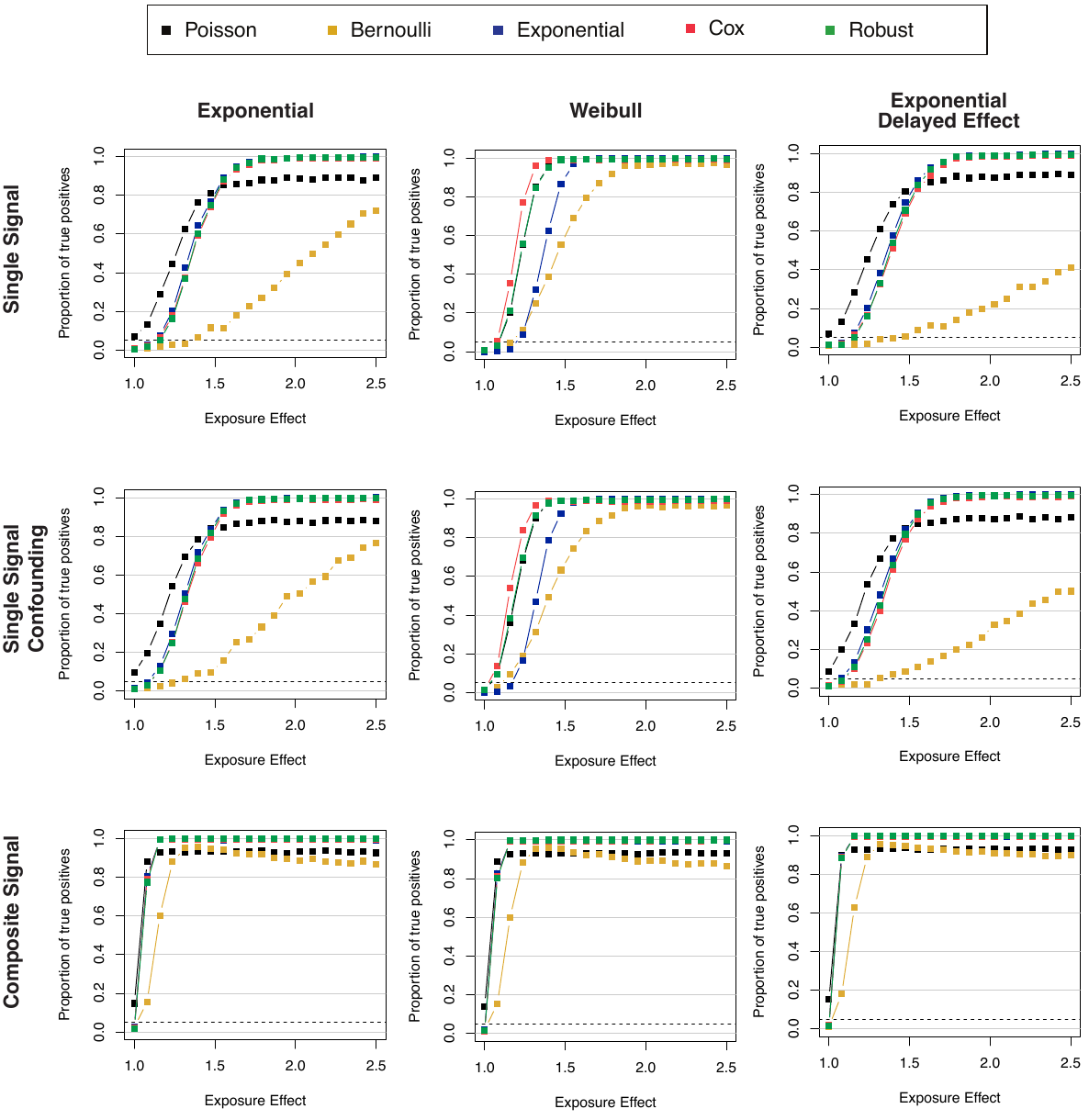}
    \caption{Proportion of nodes in $\mathcal I_1$ rejected (true positive) for the simulations described in \secname~\ref{sec:simulations}. Each column represents a distributional assumption for the time-to-event data: column 1 shows exponential time-to-event data, column 2 shows Weibull time-to-event data, and column 3 shows exponential time-to-event with delayed effect.  The three rows are the three different simulation scenarios. In the first row, the treatment effect is added to a single leaf (Node-25 of the tree in \figurename~\ref{fig:sim_tree}), and the propensity score is correctly specified. In the second row, the propensity score model is mispecified because of unmeasured confounders, and in the third row, the exposure effect is included at a higher level (Node-6  in \figurename~\ref{fig:sim_tree}). The x-axis represents exposure effects $\exp\{\omega^{(25)}\}$ or the effect induced on Node-6 (third row). An exposure effect = 1 indicates no effect.}
    \label{fig:tpr}
\end{figure}

\begin{figure}[ht!]
    \centering
    \includegraphics[width=\textwidth]{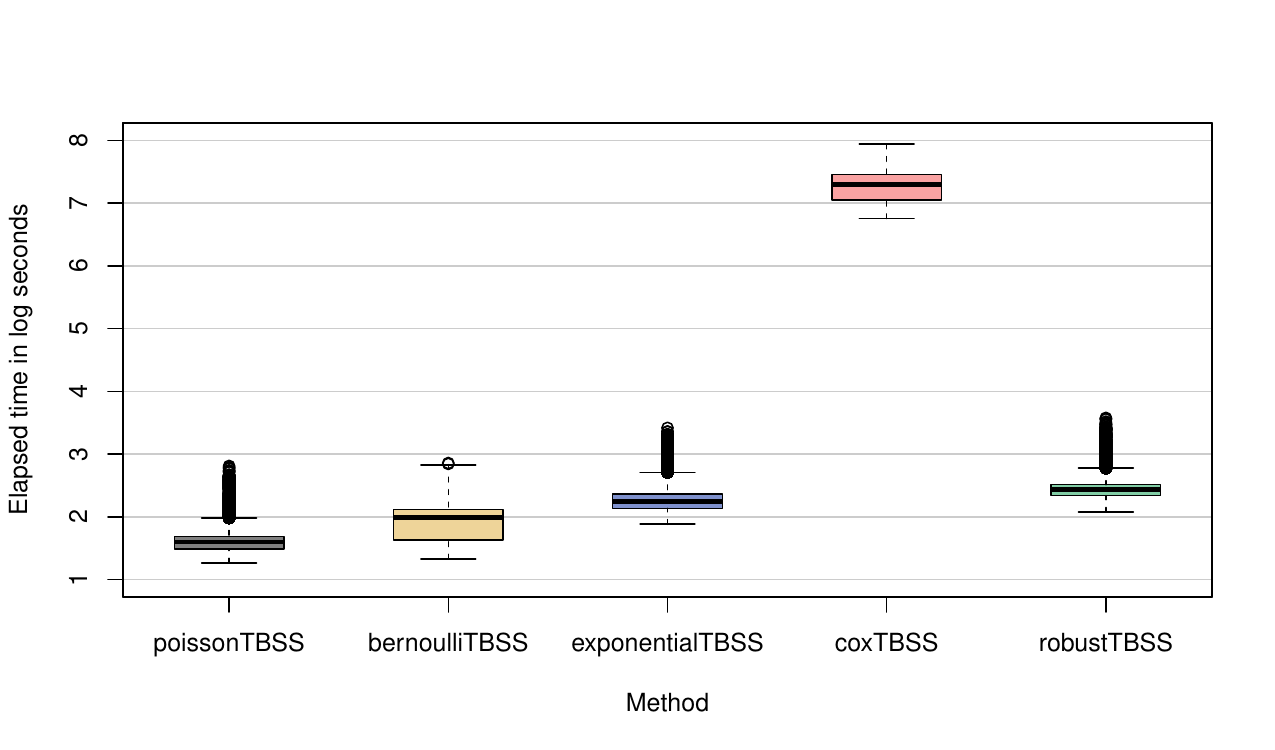}
\caption{Boxplots for the computational times of the methods considered in \secname~\ref{sec:simulations} of the main manuscript. Times are elapsed times computed using the \texttt{system.time()} function in R (third output) expressed on the log-second scale. The boxplots show the execution times across all simulation scenarios, with 160'000 replicates for each method.}
    \label{fig:sim_time}
\end{figure}

\begin{figure}[ht!]
\includegraphics[width=\textwidth]{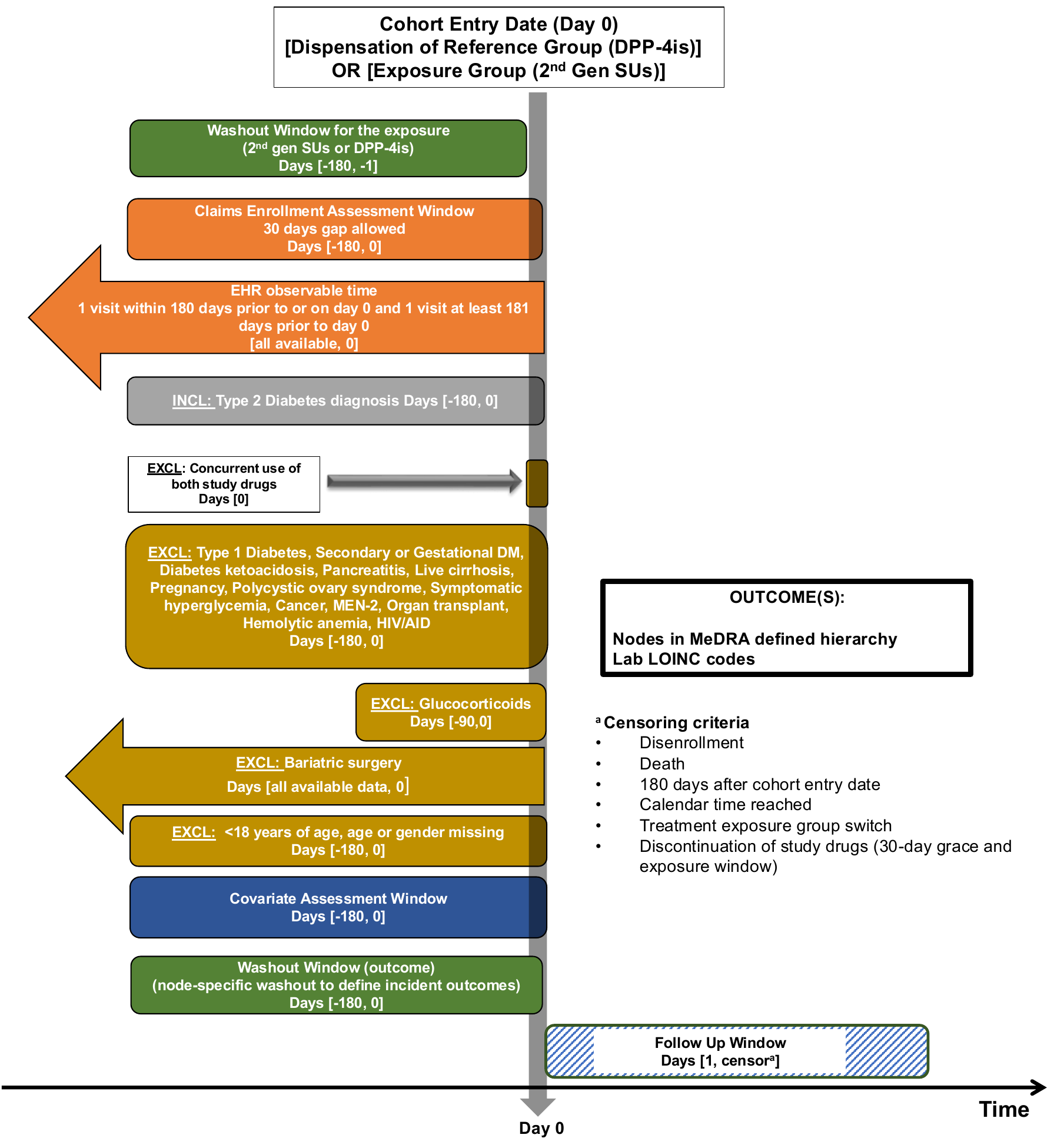}
\caption{Diagram of the study design for the application considered in \secname~\ref{sec:application} of the main manuscript.}
\label{fig:study_design}
\end{figure}

\clearpage
\section{Additional details on the simulation study}
\subsection{Propensity score matching}\label{sec:ps_matching}
We consider the following procedure~\citep[][]{rosenbaum:1983}:
\begin{enumerate}
    \item Fit a propensity score model via logistic regression with covariates $\boldsymbol X^{(\ell)}_i,$ for $i=1,\ldots,n_\ell$ and $\ell \in\mathcal{L},$
    $\texttt{PS}^{(\ell)}_i =  [1 + \exp\{-(\eta_1 +[\textbf {X}^{(\ell)}_i]^\intercal  \boldsymbol \zeta_1   )\}]^{-1}.$
    We use a single PS model for all leaves of the tree (see \cite{wang:2018} for discussion). In our simulations without unmeasured confounders, $\boldsymbol{\gamma}_1 = (0, \ldots,0)$, this is a correctly specified PS model.

    \item For each treated individual in the dataset, we compute the absolute difference of their PS with each non-treated individual. We collect this distance in a $\sum_{\ell \in \mathcal{L}} n^{(\ell)}_1 \times \sum_{\ell \in \mathcal{L}} n^{(\ell)}_0$ dimensional matrix $\mathbf M$.

    \item For each individual, we find the best match: an individual in the other exposure group with the smallest absolute propensity score distance. We consider the matches in order of distance, i.e., the first matched pair is the pair of individuals with the smallest observed PS absolute distance (``best match''). The second matched pair is the pair of individuals with the second smallest observed PS absolute distance (``second best match''), and so on. 

\end{enumerate}
The matching is without replacement, and a match is considered valid if and only if the distance (collected in the matrix $\mathbf M $) is smaller than a prespecified {\it caliper}.  Individuals who are not matched are discarded from the analysis.

\subsection{Generate simulated data with composite signals}
\label{sec:composite}

For simulation scenarios with composite signals (e.g., the third row in \figurename~\ref{fig:sim1}  of the main manuscript), we inject a signal by adjusting the probability that a patient is assigned to a certain exposure group based on their time-to-event in the composite node of interest. In other words, by associating patients with larger/smaller time-to-event distributions within the exposure group, we implicitly suggest better/worse outcomes for these patients. 

As in the other simulations, we start by generating time-to-events
$Y^{(\ell)}_i$  
and covariates 
${X}^{(\ell)}_i$  and  $\mathbf Z^{(\ell)}_i$
for  $i=1,\ldots, n_\ell$ and $\ell \in \mathcal L$ from pre-specified distributions.  We define the propensity score associated with patient $i$ as:
$$
\texttt{PS}^{(\ell)}_i =  [1 + \exp\{-(\eta_1 +[\textbf {X}^{(\ell)}_i]^\intercal  \boldsymbol \zeta_1   +  [\mathbf Z^{(\ell)}_i]^{\intercal} \boldsymbol \gamma_1)\}]^{-1}.
$$

Treatment effect is then induced by using a probability of being assigned to the exposure group defined via
$$
\mbox{pr}\{A^{(\ell)}_i = 1\} = [1 + \exp\{-(\upsilon_0  + \texttt{PS}_i  + \omega^{(\ell)}Y^{(\ell)}_i)\}]^{-1}.
$$

The parameter $\upsilon_0$ is chosen to ensure that approximately 50\% of the patients are associated with each exposure group. For positive $\omega^{(\ell)}$,  patients with larger time-to-event are more likely to be associated with the $A^{(\ell)}_i = 1$ group.  In our simulation we set $\omega^{(\ell)} \in \{0,\cdots,\log(2.5)\},$  for the leaves connected with Node-6 (Nodes 13, 24, and 25, seed \figurename~\ref{fig:sim_tree}), and $0$ for all the other leaves. 

For the scenario with delayed exposure effect (row 3, column 3) the parameter  $\omega^{(\ell)}$  for $\ell \in \{13,24,25\}$ is different from $0$ only for patients where $Y_i> 3.$

\section{Incident outcomes}\label{sec:incident}
TBSSs are typically used to screen and prioritize statistical alerts. When screening potential outcomes associated with medications of interest, it is important to focus on outcomes that the medications could theoretically cause. Therefore, diagnostic codes reflecting things like ``family history'' or conditions with long latency/induction, like cancer, may be pruned from the tree. Additionally, to be more confident that diagnostic codes reflect new occurrences of a health event, TBSS analyses typically focus on incident outcomes.

Operationally, a washout window (e.g., six months) prior to the index date and a tree level such  
Level 3 of the tree (e.g., the first three characters of an ICD-10 code) are chosen. Then, an outcome at the selected tree level observed during the follow-up window is considered incident if the corresponding patient did not experience the same outcome within the washout window. Diagnosis codes during follow-up that do not meet incident outcome criteria are ignored.  Nodes above the selected level where incident outcomes are defined (e.g., level 1 and 2 in ICD-10) are typically not tested to avoid double counting. Patients may contribute multiple unrelated events in the tree, as long as the corresponding tree-nodes do not eventually meet in the hierarchy. 

An alternative washout strategy based on node-by-node definition of incident outcomes has been proposed in~\citet{russo:2025}. This is helpful for complex multiaxial trees, such as MedDRA, where defining incidence at a level is problematic because there are multiple overlapping pathways up the tree hierarchy. However, because each patient might contribute differently at each node, computing p-values requires permuting patient-level exposure.
We apply this strategy in 
\secname~\ref{sec:application} of the main manuscript.

\bibliography{refs}

@article{kulldorff:2003,
	title = {A {Tree}-{Based} {Scan} {Statistic} for {Database} {Disease} {Surveillance}},
	volume = {59},
	issn = {0006341X},
	url = {https://onlinelibrary.wiley.com/doi/10.1111/1541-0420.00039},
	doi = {10.1111/1541-0420.00039},
	language = {en},
	number = {2},
	urldate = {2023-01-17},
	journal = {Biometrics},
	author = {Kulldorff, Martin and Fang, Zixing and Walsh, Stephen J},
	month = jun,
	year = {2003},
	note = {Number: 2},
	pages = {323--331},
}

@incollection{glaz:2001,
	address = {New York, NY},
	title = {Scan {Statistics}},
	isbn = {978-1-4419-3167-2 978-1-4757-3460-7},
	url = {http://link.springer.com/10.1007/978-1-4757-3460-7_1},
	urldate = {2023-06-26},
	booktitle = {Scan {Statistics}},
	publisher = {Springer New York},
	author = {Glaz, Joseph and Naus, Joseph and Wallenstein, Sylvan},
	collaborator = {Glaz, Joseph and Naus, Joseph and Wallenstein, Sylvan},
	year = {2001},
	doi = {10.1007/978-1-4757-3460-7_1},
	note = {Series Title: Springer Series in Statistics},
	pages = {3--9},
}

@article{fralick:2021,
	title = {A novel data mining application to detect safety signals for newly approved medications in routine care of patients with diabetes},
	volume = {4},
	issn = {2398-9238, 2398-9238},
	url = {https://onlinelibrary.wiley.com/doi/10.1002/edm2.237},
	doi = {10.1002/edm2.237},
	language = {en},
	number = {3},
	urldate = {2023-07-03},
	journal = {Endocrinology, Diabetes \& Metabolism},
	author = {Fralick, Michael and Kulldorff, Martin and Redelmeier, Donald and Wang, Shirley V. and Vine, Seanna and Schneeweiss, Sebastian and Patorno, Elisabetta},
	month = jul,
	year = {2021},
}

@article{wang:2018,
	title = {Data {Mining} for {Adverse} {Drug} {Events} {With} a {Propensity} {Score}-matched {Tree}-based {Scan} {Statistic}},
	volume = {29},
	issn = {1044-3983},
	url = {https://journals.lww.com/00001648-201811000-00019},
	doi = {10.1097/EDE.0000000000000907},
	language = {en},
	number = {6},
	urldate = {2023-07-04},
	journal = {Epidemiology},
	author = {Wang, Shirley V. and Maro, Judith C. and Baro, Elande and Izem, Rima and Dashevsky, Inna and Rogers, James R. and Nguyen, Michael and Gagne, Joshua J. and Patorno, Elisabetta and Huybrechts, Krista F. and Major, Jacqueline M. and Zhou, Esther and Reidy, Megan and Cosgrove, Austin and Schneeweiss, Sebastian and Kulldorff, Martin},
	month = nov,
	year = {2018},
	pages = {895--903},
}

@article{huybrechts:2021,
	title = {Active {Surveillance} of the {Safety} of {Medications} {Used} {During} {Pregnancy}},
	volume = {190},
	issn = {0002-9262, 1476-6256},
	url = {https://academic.oup.com/aje/article/190/6/1159/6071894},
	doi = {10.1093/aje/kwaa288},
	language = {en},
	number = {6},
	urldate = {2023-08-16},
	journal = {American Journal of Epidemiology},
	author = {Huybrechts, Krista F and Kulldorff, Martin and Hernández-Díaz, Sonia and Bateman, Brian T and Zhu, Yanmin and Mogun, Helen and Wang, Shirley V},
	month = jun,
	year = {2021},
	pages = {1159--1168},
}

@article{kulldorff:2013,
author = {Kulldorff, Martin and Dashevsky, Inna and Avery, Taliser R. and Chan, Arnold K. and Davis, Robert L. and Graham, David and Platt, Richard and Andrade, Susan E and Boudreau, Denise and Gunter, Margaret J. and Herrinton, Lisa J. and Pawloski, Pamala A. and Raebel, Marsha A. and Roblin, Douglas and Brown, Jeffrey S.},
title = {Drug safety data mining with a tree-based scan statistic},
journal = {Pharmacoepidemiology and Drug Safety},
volume = {22},
number = {5},
pages = {517-523},
doi = {https://doi.org/10.1002/pds.3423},
url = {https://onlinelibrary.wiley.com/doi/abs/10.1002/pds.3423},
eprint = {https://onlinelibrary.wiley.com/doi/pdf/10.1002/pds.3423},
year = {2013}
}

@article{brown:2013,
  title={Drug adverse event detection in health plan data using the Gamma Poisson Shrinker and comparison to the tree-based scan statistic},
  author={Brown, Jeffrey S and Petronis, Kenneth R and Bate, Andrew and Zhang, Fang and Dashevsky, Inna and Kulldorff, Martin and Avery, Taliser R and Davis, Robert L and Chan, K Arnold and Andrade, Susan E and others},
  journal={Pharmaceutics},
  volume={5},
  number={1},
  pages={179--200},
  year={2013},
  publisher={MDPI}
}

@article{rosenbaum:1983,
  title={The central role of the propensity score in observational studies for causal effects},
  author={Rosenbaum, Paul R and Rubin, Donald B},
  journal={Biometrika},
  volume={70},
  number={1},
  pages={41--55},
  year={1983},
  publisher={Oxford University Press}
}

@book{kalbfleisch:2002,
  title={The statistical analysis of failure time data},
  author={Kalbfleisch, John D and Prentice, Ross L},
  year={2002},
  publisher={John Wiley \& Sons}
}

@article{hjort:1992,
 ISSN = {03067734, 17515823},
 URL = {http://www.jstor.org/stable/1403683},
 author = {Nils Lid Hjort},
 journal = {International Statistical Review / Revue Internationale de Statistique},
 number = {3},
 pages = {355--387},
 publisher = {[Wiley, International Statistical Institute (ISI)]},
 title = {On Inference in Parametric Survival Data Models},
 urldate = {2024-04-12},
 volume = {60},
 year = {1992}
}

@article{russo:2024,
author = {Russo, Massimiliano and Wang, Shirley V.},
title = {An open-source implementation of tree-based scan statistics},
journal = {Pharmacoepidemiology and Drug Safety},
volume = {33},
number = {3},
pages = {e5765},
doi = {https://doi.org/10.1002/pds.5765},
url = {https://onlinelibrary.wiley.com/doi/abs/10.1002/pds.5765},
eprint = {https://onlinelibrary.wiley.com/doi/pdf/10.1002/pds.5765},
year = {2024}
}

@article{Cox:1972,
 ISSN = {00359246},
 URL = {http://www.jstor.org/stable/2985181},
 author = {D. R. Cox},
 journal = {Journal of the Royal Statistical Society. Series B (Methodological)},
 number = {2},
 pages = {187--220},
 publisher = {[Royal Statistical Society, Wiley]},
 title = {Regression Models and Life-Tables},
 urldate = {2024-04-26},
 volume = {34},
 year = {1972}
}

@article{cox:1975,
 ISSN = {00063444},
 URL = {http://www.jstor.org/stable/2335362},
 author = {D. R. Cox},
 journal = {Biometrika},
 number = {2},
 pages = {269--276},
 publisher = {[Oxford University Press, Biometrika Trust]},
 title = {Partial Likelihood},
 urldate = {2024-05-24},
 volume = {62},
 year = {1975}
}

@article{suarez:2023,
  title={Monitoring drug safety in pregnancy with scan statistics: a comparison of two study designs},
  author={Suarez, Elizabeth A and Nguyen, Michael and Zhang, Di and Zhao, Yueqin and Stojanovic, Danijela and Munoz, Monica and Liedtka, Jane and Anderson, Abby and Liu, Wei and Dashevsky, Inna and others},
  journal={Epidemiology},
  volume={34},
  number={1},
  pages={90--98},
  year={2023},
  publisher={LWW}
}

@book{dickhaus:2014,
	address = {Berlin, Heidelberg},
	title = {Simultaneous {Statistical} {Inference}: {With} {Applications} in the {Life} {Sciences}},
	copyright = {https://www.springernature.com/gp/researchers/text-and-data-mining},
	isbn = {978-3-642-45181-2 978-3-642-45182-9},
	shorttitle = {Simultaneous {Statistical} {Inference}},
	url = {https://link.springer.com/10.1007/978-3-642-45182-9},
	language = {en},
	urldate = {2024-06-11},
	publisher = {Springer Berlin Heidelberg},
	author = {Dickhaus, Thorsten},
	year = {2014},
	doi = {10.1007/978-3-642-45182-9}
}

@article{schachterle:2019,
	title = {An {Implementation} and {Visualization} of the {Tree}-{Based} {Scan} {Statistic} for {Safety} {Event} {Monitoring} in {Longitudinal} {Electronic} {Health} {Data}},
	volume = {42},
	issn = {0114-5916, 1179-1942},
	url = {http://link.springer.com/10.1007/s40264-018-00784-0},
	doi = {10.1007/s40264-018-00784-0},
	language = {en},
	number = {6},
	urldate = {2024-06-12},
	journal = {Drug Safety},
	author = {Schachterle, Stephen E. and Hurley, Sharon and Liu, Qing and Petronis, Kenneth R. and Bate, Andrew},
	month = jun,
	year = {2019},
	pages = {727--741},
}

@article{park:2022,
	title = {A tree-based scan statistic for zero-inflated count data in post-market drug safety surveillance},
	volume = {12},
	issn = {2045-2322},
	url = {https://www.nature.com/articles/s41598-022-19998-5},
	doi = {10.1038/s41598-022-19998-5},
	language = {en},
	number = {1},
	urldate = {2024-06-12},
	journal = {Scientific Reports},
	author = {Park, Goeun and Jung, Inkyung},
	month = sep,
	year = {2022},
	pages = {16299}
}

@article{heo:2023,
	title = {Signal detection statistics of adverse drug events in hierarchical structure for matched case–control data},
	copyright = {https://creativecommons.org/licenses/by-nc/4.0/},
	issn = {1465-4644, 1468-4357},
	url = {https://academic.oup.com/biostatistics/advance-article/doi/10.1093/biostatistics/kxad029/7330642},
	doi = {10.1093/biostatistics/kxad029},
	language = {en},
	urldate = {2024-06-12},
	journal = {Biostatistics},
	author = {Heo, Seok-Jae and Jeong, Sohee and Jung, Dagyeom and Jung, Inkyung},
	month = oct,
	year = {2023},
	pages = {kxad029},
}

@article{annesi:1989,
	title = {Efficiency of the logistic regression and {C}ox proportional hazards models in longitudinal studies},
	volume = {8},
	copyright = {http://onlinelibrary.wiley.com/termsAndConditions\#vor},
	issn = {0277-6715, 1097-0258},
	url = {https://onlinelibrary.wiley.com/doi/10.1002/sim.4780081211},
	doi = {10.1002/sim.4780081211},
	language = {en},
	number = {12},
	urldate = {2024-06-12},
	journal = {Statistics in Medicine},
	author = {Annesi, Isabella and Moreau, Thierry and Lellouch, Joseph},
	month = dec,
	year = {1989},
	pages = {1515--1521},
}

@article{pottegard:2022,
	title = {Core concepts in pharmacoepidemiology: {Fundamentals} of the cohort and case–control study designs},
	volume = {31},
	issn = {1053-8569, 1099-1557},
	shorttitle = {Core concepts in pharmacoepidemiology},
	url = {https://onlinelibrary.wiley.com/doi/10.1002/pds.5482},
	doi = {10.1002/pds.5482},
	language = {en},
	number = {8},
	urldate = {2024-06-12},
	journal = {Pharmacoepidemiology and Drug Safety},
	author = {Pottegård, Anton},
	month = aug,
	year = {2022},
	pages = {817--826}
}

@article{gerhard:2008,
	title = {Bias: {Considerations} for research practice},
	volume = {65},
	issn = {1079-2082, 1535-2900},
	shorttitle = {Bias},
	url = {https://academic.oup.com/ajhp/article/65/22/2159/5128125},
	doi = {10.2146/ajhp070369},
	language = {en},
	number = {22},
	urldate = {2024-06-12},
	journal = {American Journal of Health-System Pharmacy},
	author = {Gerhard, Tobias},
	month = nov,
	year = {2008},
	pages = {2159--2168},
}

@article{suissa:2008,
	title = {Immortal {Time} {Bias} in {Pharmacoepidemiology}},
	volume = {167},
	issn = {0002-9262, 1476-6256},
	url = {https://academic.oup.com/aje/article-lookup/doi/10.1093/aje/kwm324},
	doi = {10.1093/aje/kwm324},
	language = {en},
	number = {4},
	urldate = {2024-06-12},
	journal = {American Journal of Epidemiology},
	author = {Suissa, S.},
	month = jan,
	year = {2008},
	pages = {492--499}
}

@article{thuy:2024,
    author = {Thai, Thuy N and Winterstein, Almut G and Suarez, Elizabeth A and He, Jiwei and Zhao, Yueqin and Zhang, Di and Stojanovic, Danijela and Liedtka, Jane and Anderson, Abby and Hernández-Muñoz, José J and Munoz, Monica and Liu, Wei and Dashevsky, Inna and Messenger-Jones, Elizabeth and Siranosian, Elizabeth and Maro, Judith C},
    title = "{Triple challenges – Small sample size in both exposure and control groups to scan rare maternal outcomes in a signal identification approach: A simulation study}",
    journal = {American Journal of Epidemiology},
    pages = {kwae151},
    year = {2024},
    month = {06},
    issn = {0002-9262},
    doi = {10.1093/aje/kwae151},
    url = {https://doi.org/10.1093/aje/kwae151},
    eprint = {https://academic.oup.com/aje/advance-article-pdf/doi/10.1093/aje/kwae151/58320103/kwae151.pdf},
}

@article{yih:2023a,
	title = {Safety signal identification for {COVID}-19 bivalent booster vaccination using tree-based scan statistics in the {Vaccine} {Safety} {Datalink}},
	volume = {41},
	issn = {0264410X},
	url = {https://linkinghub.elsevier.com/retrieve/pii/S0264410X2300823X},
	doi = {10.1016/j.vaccine.2023.07.010},
	language = {en},
	number = {36},
	urldate = {2025-06-23},
	journal = {Vaccine},
	author = {Yih, W Katherine and Daley, Matthew F and Duffy, Jonathan and Fireman, Bruce and McClure, David L. and Nelson, Jennifer C. and Qian, Lei and Smith, Ning and Vazquez-Benitez, Gabriela and Weintraub, Eric and Williams, Joshua T.B. and Xu, Stanley and Maro, Judith C.},
	month = aug,
	year = {2023},
	pages = {5265--5270},
}

@article{yih:2023b,
	title = {Sequential {Data}-{Mining} for {Adverse} {Events} {After} {Recombinant} {Herpes} {Zoster} {Vaccination} {Using} the {Tree}-{Based} {Scan} {Statistic}},
	volume = {192},
	copyright = {https://academic.oup.com/journals/pages/open\_access/funder\_policies/chorus/standard\_publication\_model},
	issn = {0002-9262, 1476-6256},
	url = {https://academic.oup.com/aje/article/192/2/276/6760288},
	doi = {10.1093/aje/kwac176},
	language = {en},
	number = {2},
	urldate = {2025-06-23},
	journal = {American Journal of Epidemiology},
	author = {Yih, W Katherine and Kulldorff, Martin and Dashevsky, Inna and Maro, Judith C},
	month = feb,
	year = {2023},
	pages = {276--282},
}

@article{russo:2025,
    author = {Russo, Massimiliano and Sreedhara, Sushama Kattinakere and Smith, Joshua and Davis, Sharon E and Maro, Judith C and Deramus, Thomas and Lii, Joyce and Yang, Jie and Desai, Rishi J and Hernández-Muñoz, José J and Ma, Yong and Wang, Youjin and Jones, Jamal T and Wang, Shirley V},
    title = {Electronic health record-enhanced signal detection using tree-based scan statistic methods},
    journal = {American Journal of Epidemiology},
    volume = {195},
    number = {1},
    pages = {178-187},
    year = {2025},
    month = {09},
    issn = {0002-9262},
    doi = {10.1093/aje/kwaf199},
    url = {https://doi.org/10.1093/aje/kwaf199},
    eprint = {https://academic.oup.com/aje/article-pdf/195/1/178/64219178/kwaf199.pdf}
}

@article{desai:2024,
  title={The FDA Sentinel Real World Evidence Data Enterprise (RWE-DE)},
  author={Desai, Rishi J and Marsolo, Keith and Smith, Joshua and Carrell, David and Penfold, Robert and Pillai, Haritha S and Lii, Joyce and Ngan, Kerry and Winter, Robert and Adgent, Margaret and others},
  journal={Pharmacoepidemiology and drug safety},
  volume={33},
  number={10},
  pages={e70028},
  year={2024},
  publisher={Wiley Online Library}
}

@article{desai:2017,
  title={A propensity-score-based fine stratification approach for confounding adjustment when exposure is infrequent},
  author={Desai, Rishi J and Rothman, Kenneth J and Bateman, Brian T and Hernandez-Diaz, Sonia and Huybrechts, Krista F},
  journal={Epidemiology},
  volume={28},
  number={2},
  pages={249--257},
  year={2017},
  publisher={LWW}
}

@article{smith:2026,
    author = {Smith, Joshua C and Davis, Sharon E and Reeves, Ruth M and Winter, Robert and Whitaker, Jill and Park, Daniel and Wang, Shirley V and Russo, Massimiliano and Maro, Judith C and Hernández-Muñoz, José J and Ma, Yong and Wang, Youjin and Jones, Jamal T and Desai, Rishi J and Matheny, Michael E},
    title = {Characterization and comparison of structured and unstructured electronic health record data mapped to MedDRA for post-marketing surveillance},
    journal = {JAMIA Open},
    volume = {9},
    number = {2},
    pages = {ooag025},
    year = {2026},
    month = {03},
    issn = {2574-2531},
    doi = {10.1093/jamiaopen/ooag025},
    url = {https://doi.org/10.1093/jamiaopen/ooag025},
    eprint = {https://academic.oup.com/jamiaopen/article-pdf/9/2/ooag025/67341224/ooag025.pdf},
}

@article{mishriky:2015,
title = {The efficacy and safety of DPP4 inhibitors compared to sulfonylureas as add-on therapy to metformin in patients with Type 2 diabetes: A systematic review and meta-analysis},
journal = {Diabetes Research and Clinical Practice},
volume = {109},
number = {2},
pages = {378-388},
year = {2015},
issn = {0168-8227},
doi = {https://doi.org/10.1016/j.diabres.2015.05.025},
url = {https://www.sciencedirect.com/science/article/pii/S0168822715002557},
author = {Basem M. Mishriky and Doyle M. Cummings and Robert J. Tanenberg}
}

\end{document}